\newcommand{\FunFail}[1]{\langle #1 \rangle} 
\newcommand{\FunMode}[2]{[ #1 ]_{#2}}
\newcommand\Defs {{:=}\,}
\newcommand\Wide[1] {~~~#1~~~}
\newcommand\True {\textsf{t}}
\newcommand\False {\textsf{f}}
\newcommand\modes {\sf md}
\newcommand{\RevFunMode}[2]{[ #1 ]^{-}_{#2}}
\newcommand{\Var}[1] {#1} 
\newcommand{\ValRl}[1]{
    \IfEqCase{#1}{
        {x}{\alpha}
        {y}{\beta}
        {z}{\gamma}
        {w}{\delta}
        {v}{\zeta}
        {#1}{#1}
    }
}
\newcommand{\ValRlV}[1]{
    \IfEqCase{#1}{
        {x}{\bm{\alpha}}
        {y}{\bm{\beta}}
        {z}{\bm{\gamma}}
        {w}{\bm{\delta}}
        {v}{\bm{\zeta}}
        {#1}{\bm{#1}}
    }
}
\newcommand{\ValRp}[1]{
    \IfEqCase{#1}{
        {x}{\tilde{\alpha}}
        {y}{\tilde{\beta}}
        {z}{\tilde{\gamma}}
        {w}{\tilde{\delta}}
        {v}{\tilde{\zeta}}
        {#1}{\tilde{#1}}
    }
}
\newcommand{\ValRpV}[1]{
    \IfEqCase{#1}{
        {x}{\bm{\tilde{\alpha}}}
        {y}{\bm{\tilde{\beta}}}
        {z}{\bm{\tilde{\gamma}}}
        {w}{\bm{\tilde{\delta}}}
        {v}{\bm{\tilde{\zeta}}}
        {#1}{\bm{\tilde{#1}}}
    }
}
\newcommand{\FVal}[1]{ 
 {\Hat{#1}}
 }   
\newcommand{\FValV}[1]{
    \IfEqCase{#1}{
        {x}{\bm{\dot{\alpha}}}
        {y}{\bm{\dot{\beta}}}
        {z}{\bm{\dot{\gamma}}}
        {w}{\bm{\dot{\delta}}}
        {v}{\bm{\dot{\zeta}}}
        {#1}{\bm{\dot{#1}}}
    }
}
\newcommand{\Imply}{\Rightarrow}
\newcommand{\Scenario}[3]{\{~ #1~\} ~~#2~~ \{~#3~\}}
\begin{document}
\raggedbottom
\title{Reasoning with failures}
\author{Hamid Jahanian\inst{1} \and Annabelle McIver\inst{2}}

\institute{\email{hamid.jahanian@hdr.mq.edu.au}\and
 \email{annabelle.mciver@mq.edu.au}}
\maketitle 

\begin{abstract}
Safety Instrumented Systems (SIS) protect major hazard facilities, e.g. power plants, against catastrophic accidents. An SIS consists of hardware components and a controller software -- the ``program''. Current safety analyses of SIS' include the construction of a fault tree, summarising potential faults of the components and how they can arise within an SIS.  The exercise of identifying faults typically relies on the experience of the safety engineer. Unfortunately the program part is often too complicated to be analysed in such a ``by hand" manner and so the impact it has on the resulting safety analysis is not accurately captured. In this paper we demonstrate how a formal model for faults and failure modes can be used to analyse the impact of an SIS program. We outline the underlying concepts of \emph{Failure Mode Reasoning} and its application in safety analysis, and we illustrate the ideas on a practical example.
\end{abstract}

\section{Introduction}
Plant accidents can have catastrophic consequences. An explosion at a chemical plant in eastern China killed over 70 people and injured more than 600 in 2019. Safety Instrumented Systems (SIS) are protection mechanisms against major plant accidents \cite{Ref_190}.  Failure of SIS components can result in the SIS being unavailable to respond to hazardous situations.
It is therefore crucial to analyse and address such failures. A typical SIS comprises physical components to interact with plant, and a software program\footnote{In this paper the term \textit{program} refers to the software code run by SIS CPU; also known in safety standards as SIS Application Program \cite{Ref_190}.} that analyses the information and initiates safety actions. Such software can be highly complex, and even when it is not itself faulty still propagate input faults from the sensors to the safety actuators. This paper concerns a current omission in the standard safety engineering process: that of an accurate fault analysis of complex SIS program. 

Well established methods, such as Fault Tree Analysis (FTA), already exist in the industry for analysing and quantifying SIS failure modes \cite{Ref_176}. FTA is a deductive method that uses fault trees for quantitative and qualitative analysis of failure scenarios. A fault tree is a graphical representation of the conditions that contribute to the occurrence of a predefined failure event. A fault tree will be created by a safety analyst and based on their knowledge and understanding of the failure behaviours in a system. Not only are such by-hand analyses inherently subject to human-error, they also require expertise, time and effort. 

A new method, Failure Mode Reasoning (FMR), was recently introduced to circumvent the need for by-hand analysis of parts of SIS \cite{Ref_200}. Using a special calculus built on failure modes, FMR analyses the SIS program to identify the hardware faults at SIS inputs that can result in a given failure at its output. The main outcome of FMR is a short list of failure modes, which can also be used to calculate the probability of failure. In this paper we show how to use ideas from  formal methods to justify FMR. We use an abstraction to model failures directly, and we show that such an abstraction can be used to track failures within the SIS program so that potential output failures can be linked to the potential input failures that cause them. We prove the soundness of the technique and illustrate it on a practical example.

The rest of this paper is organised as follows: Section \ref{bkgsec} provides a brief explanation of the context and how FMR can enhance safety analysis. Section \ref{s1245} formalises the underlying ideas of analysis of failures for SIS programs. Based on these concepts, Sections \ref{fmrsec} and \ref{rfbssec} formulate the concepts for composing the individual elements in FMR and the reasoning process on the interactions between these elements. Section \ref{rflsec} includes descriptions of how FMR is applied in practice and in particular in large scale projects. Finally Sections \ref{discsec} and \ref{concsec} wrap up the paper with a review of FMR's position with respect to other research works and potential research in future.

\section{SIS and FMR}\label{bkgsec}
An SIS consists of sensors, a logic solver, and final elements. The sensors collect data about the environment (such as temperature and pressure) and the logic solver processes the sensor readings and controls the final elements to intervene and prevent a hazard. Such interventions can include shutting down the (industrial) plant and they are referred to as Safety Instrumented Functions (SIFs). Fig.~\ref{SimpFTAB} illustrates a simple SIS consisting of two sensors, one logic solver and one final element. This SIS performs only one SIF, which is to protect the downstream process against high pressure in the upstream gas pipe. The sensors measure the gas pressure and the logic solver initiates a command to close the valve if the gas pressure exceeds a threshold limit. 

\begin{figure}[!ht]
\centering
\subfloat[Fault Tree\label{SimpFTAA}]{\includegraphics[width=0.28\columnwidth]{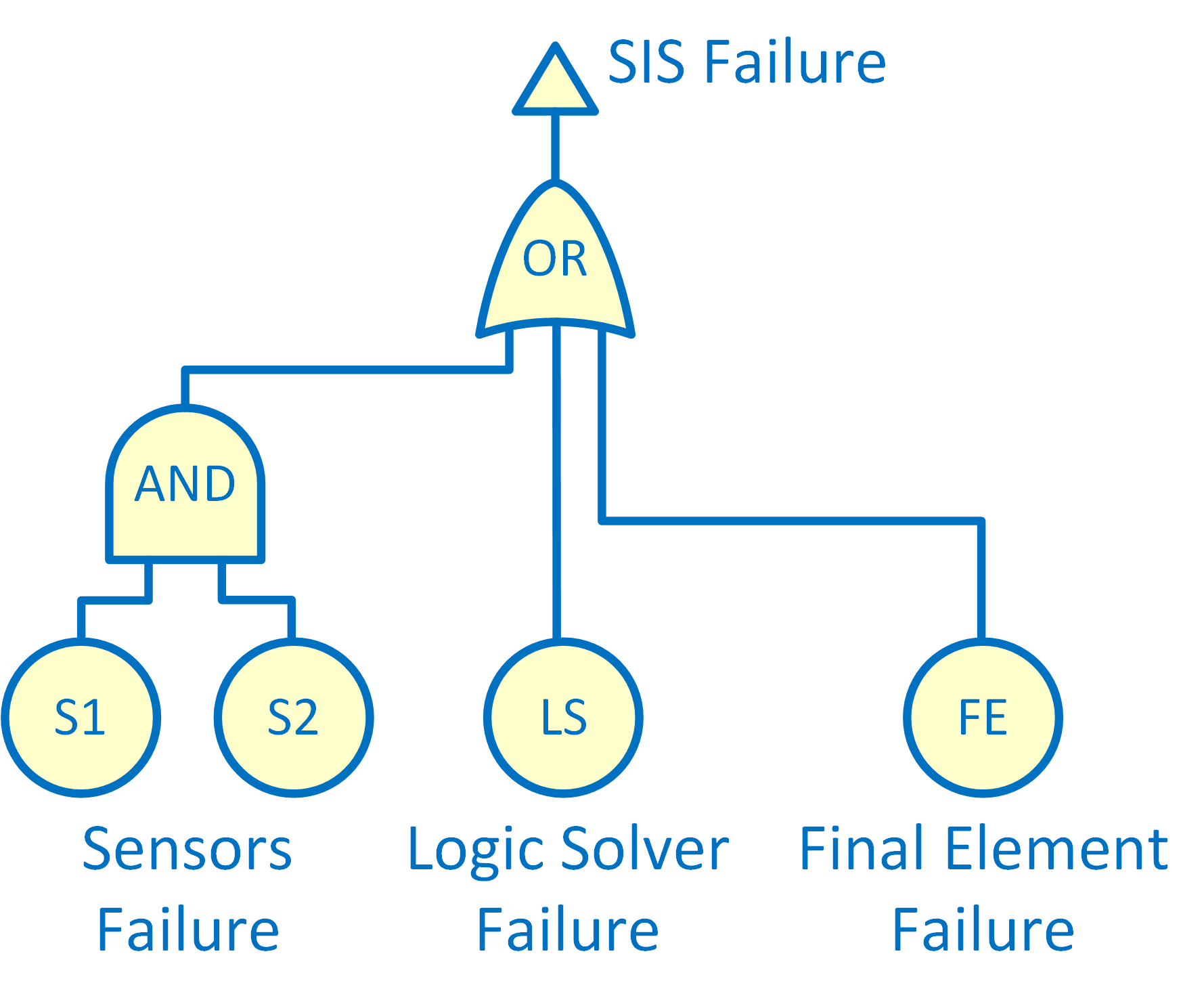}}
\qquad
\subfloat[SIS\label{SimpFTAB}]{\includegraphics[width=0.425\columnwidth]{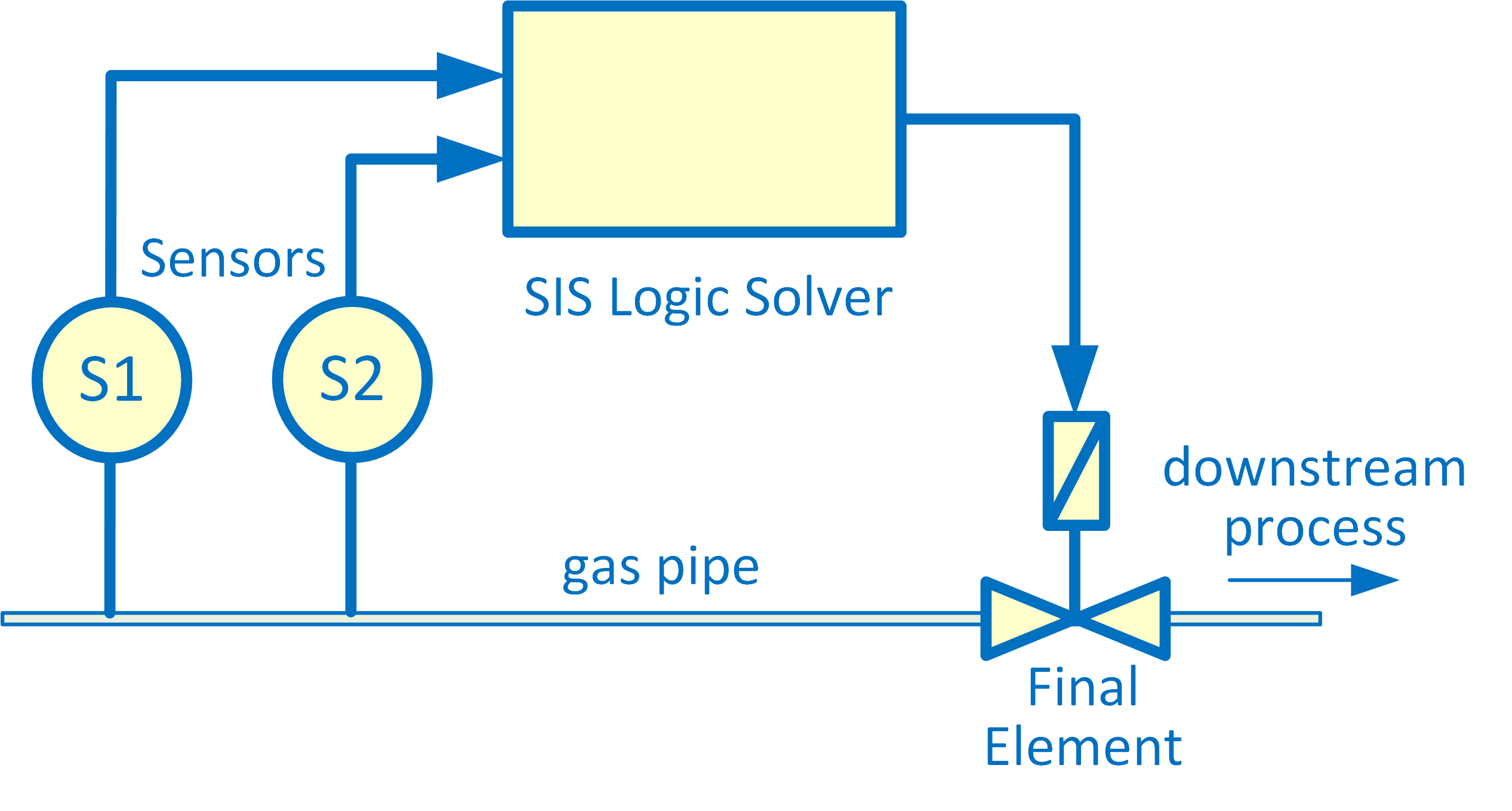}}
\caption{An example SIS and its corresponding fault tree model}
\label{SimpFTA}
\end{figure}

SIS faults are typically modelled by using fault trees. For an accurate analysis a fault tree must reflect all potential faults caused by all components in the SIS. Clearly incorrect sensor readings are a significant factor in safety analysis as they can lead to hazardous scenarios. One of the problems in safety analysis is to understand how such deviations can be propagated by the SIS program and lead to faults at SIS outputs. If done by hand, such understandings depends critically on the analyst's knowledge of the details of the SIS program. 

Consider, for example the fault tree in Fig.~\ref{SimpFTAA}, which is meant to summarise the failures of SIS in  Fig.~\ref{SimpFTAB}: the SIS fails if both sensors fail or if the logic solver fails or if the final element fails. The fault tree is built on the assumption that the two sensors provide redundancy, which means that provided one of the two sensors is in a healthy state, that is sufficient to detect potential hazards. However, the validity of this assumption, and thus the validity of the fault tree, directly depends on the details of SIS program and how it computes the output from the input. For example, if the two inputs from sensors are averaged first and then compared to the high pressure limit as shown in Fig.~\ref{SimpProgA}, the proposed fault tree (Fig.~\ref{SimpFTAA}) is incorrect; because failure of one sensor will affect the average of the two. But if each sensor reading is separately compared to the threshold limit first (as in Fig.~\ref{SimpProgB}), the sensors can be considered redundant and the fault tree would summarise the failures accurately. While the two programs deliver the same functionality, they do not show the same failure behaviour; and the proposed fault tree can correspond to only one of them. 

\begin{figure}[!ht]
{\centering
\subfloat[Program $T_{Avg}$\label{SimpProgA}]{\includegraphics[width=0.43\columnwidth]{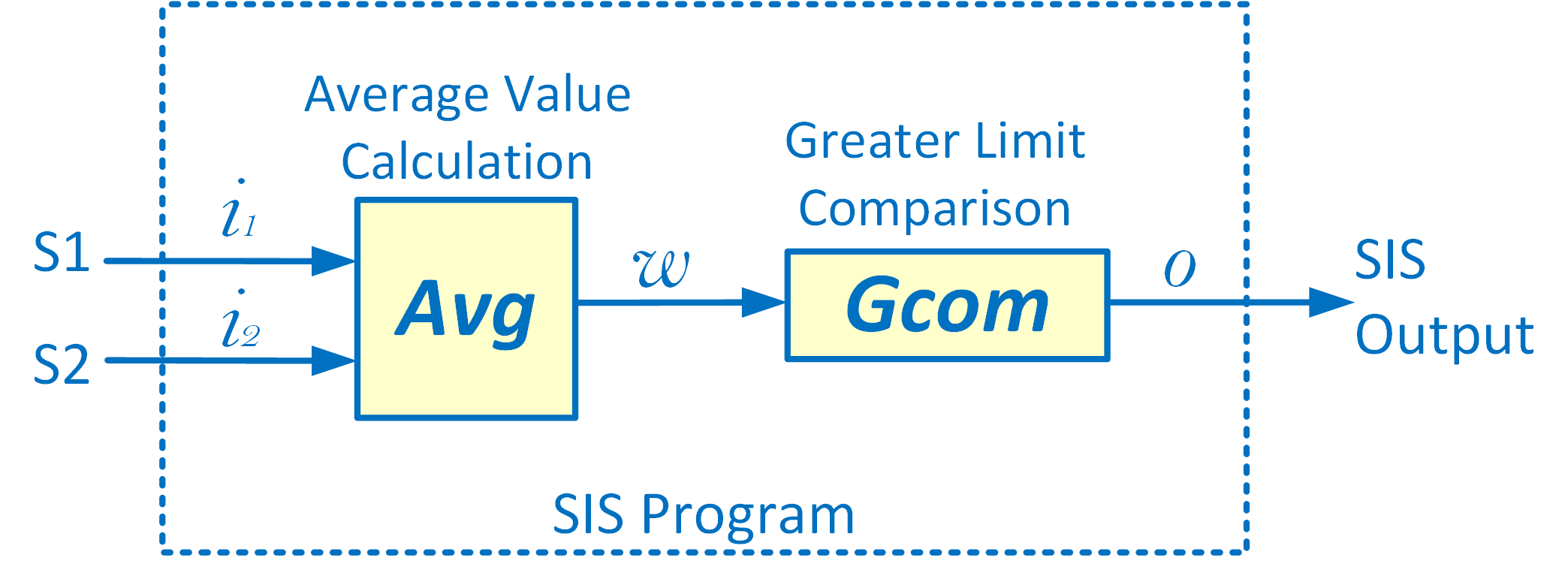}}
\quad
\subfloat[Program $T_{Or}$\label{SimpProgB}]{\includegraphics[width=0.43\columnwidth]{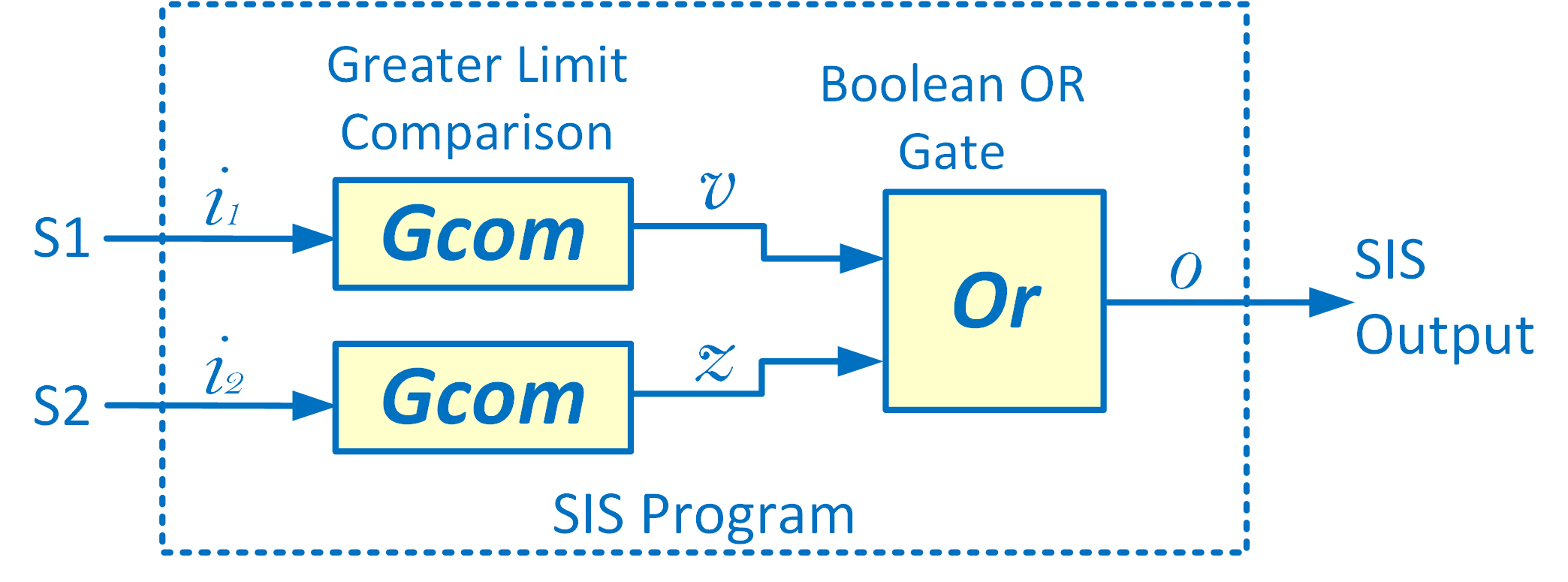}}\\}
\vspace{0.2 cm}
\begin{scriptsize}
For variables $i_1, i_2, w \in \mathbb{R}$ and $v, z, o \in \mathbb{B}$, and parameter $K \in \mathbb{R}$: $w=Avg(i_1, i_2)=(i_1+i_2)/2$, $o=Gcom_K(w)=(w>K)$ and $o=Or(v, z)=v \vee z$.
\end{scriptsize}
\caption{Two possible implementations for the Logic Solver in Fig.\ref{SimpFTAA}}
\label{SimpProg}
\end{figure}

In real world scenarios,  SIS programs are large and  complex. It is not unusual for a typical  SIS program to have hundreds of inputs like $i_1$ and $i_2$ and thousands of Function Blocks \cite{Ref_203} like $Avg$ and $Gcom_K$. Conducting a detailed analysis of program of such scales will be a real challenge for a human analyst, but it nonetheless plays a crucial part in producing accurate results. In such scenarios an automated method such as FMR can be of a great help.

FMR is a technique for enabling the details of SIS programs to be accurately reflected in the full safety analysis of a system.
The challenge we address is identifying the SIS input ``failure modes'' that cause a given SIS output failure mode by analysing the SIS program that reads those inputs and produces that output. The results can then be incorporated in an overall safety analysis of SIS. 

SIS programs are commonly developed in the form of Function Block Diagrams (FBD) \cite{Ref_203}. Fig.~\ref{SimpProg} showed two very simple examples of FBDs. An FBD consists of function blocks and their interconnections, which we label with variable names. In Fig.~\ref{SimpProgA}, $\Var{o}, \Var{w}, \Var{i}_{1}$ and $\Var{i}_{2}$ are the variables and $Avg$ and $Gcom_K$ are the function blocks. We will use this FBD as a worked example through this paper to demonstrate the FMR process.

The SIS program given at Fig. \ref{SimpProgA} is supposed to initiate a command to close the gas value when the pressure rises above a given threshold. In normal circumstances, when all inputs report correct measurements from the sensor readings, an output of $\True$ causes the correct shut down command to be delivered when the pressure is high. Suppose however that the inputs $i_1, i_2$ are incorrectly recording the pressure. These inaccuracies propagate through the program and lead to an $\False$ at the output, meaning that the SIS will not initiate the safety action required to prevent the hazardous event.

In simple terms, this is how FMR analyses such output deviations: from $\Var{o}$ being $\False$ by fault we can conclude that $\Var{w}$ must be less than the threshold limit set in $Gcom_K$: $(\Var{o}=\textsf{f}) \Imply (\Var{w}\leq K)$. Sentence $(\Var{w}\leq K)$ in turn implies that the average value of $\Var{i}_1$ and $\Var{i}_2$ must be less than the threshold limit: $(\Var{w}\leq K) \Imply ((\Var{i}_1+\Var{i}_2)/2\leq K)$. Assuming that this is due to an input fault, we can conclude that either input $\Var{i}_1$ must be reading lower than what it should, or input $\Var{i}_2$. Overall, we can conclude:
\begin{equation}\label{RevInfSt}
(\Var{o} \text{ being \False~by fault}) \Imply (\Var{i}_1 \text{ reads too low}) \vee (\Var{i}_2 \text{ reads too low}) 
\end{equation}

Notice that the actual values of inputs are not required, but only their categories in terms of whether they are ``too high'', or ``too low''. It turns out that we can take advantage of this abstraction to simplify the overall analysis. In the next section we describe a simple model of failures from which we derive an analysis that uses ``failure modes'' explicitly.\footnote{
Note that in Fig.~\ref{SimpProgB} the FMR analysis would produce a different result, i.e. $(\Var{o} \text{ being \False~by fault}) \Imply (\Var{i}_1 \text{ reads too low}) \land (\Var{i}_2 \text{ reads too low})$}

FMR completes the SIS safety analysis by incorporating the functionally most important part of the system – the program, and it does this by analysing the actual program rather than a synthesised model. The process is automated and thus it saves time and effort, and offers accuracy and certainty. The purpose of FMR is similar to fault tree analysis, but it adds rigour to the consideration of fault propagation in the SIS program.

\section{Modelling failures}\label{s1245}

In this section we formalise the ideas underlying the identification and analysis of potential failures for SIS programs. In particular the result of the analysis should be the identification of potential faults and, in addition, to categorise them in terms of their ``modes of failure''. This is an essential step in any safety engineering exercise.

In what follows we use well known constructions from relational-style modelling.  Our contribution is to apply those ideas in this new setting for SIS programs. Let ${\cal V}$ be an abstract state space; we use ${\mathbb P}{\cal X}$ for the power set over ${\cal X}$. A partition of a set ${\cal X}$ is a set of pairwise non-intersecting subsets in ${\mathbb P}{\cal X}$.

We begin with a simple abstract model for a generic SIS function. It is a function which takes inputs to outputs over an (abstract) type ${\cal V}$.

\begin{definition}
An abstract model for an SIS function is a function of type ${\cal V}\rightarrow {\cal V}$.
\end{definition}

An SIS function can be a function block (FB), a combination of FBs or the entire SIS program. As described above, the safety analyst can only access information about the safety status of the system through the SIS program variables.  The challenge is that this reported status (i.e.\ the sensor readings) might be inaccurate or wrong.  To model such  faults we need to keep track of the values recorded in the SIS program variables \emph{and} the value that \emph{should} have been reported. When these values are not the same we say that there is a fault. The next definition shows how to keep track of these faults within a particular SIS setting.

\begin{definition}\label{d1018}
Given an SIS function $f: {\cal V}\rightarrow {\cal V}$, a \emph{failure model}  is a function $\FunFail{f}: {\cal V}^2\rightarrow {\cal V}^2$ defined by
\[
\FunFail{f}(m, a) \Wide{\Defs} (f(m),f(a))~.
\]
For the pair $(m,a)\in {\cal V}^2$, the first component $m$ models the value reported by the SIS program variables, and the second component $a$ is the actual value that should be reported. We say that $(m, a)$ is a \emph{failure state} whenever $m\neq a$.
\footnote{In our abstract model we use a single type ${\cal V}$ for simplicity of presentation.}
\end{definition}

For example, in Fig.~\ref{SimpProgA} we model the
simple SIS program as a function $T_{Avg}$ of type $\mathbb {R}^2\rightarrow \mathbb{B}$, where the input (pair) corresponds to the readings of the variables $i_1, i_2$, and the output corresponds to the value of the output variable $o$.%
\footnote{Note here that we are distinguishing the types in the example.}
There are two possible output failure states wrt.\ $\FunFail{T_{Avg}} \in (\mathbb {R}^2)^2\rightarrow (\mathbb{B})^2$, and they are $(\True, \False)$ and $(\False, \True)$. 

Observe however from Def.~\ref{d1018} that the only way an output failure state can occur is if the corresponding input is also a failure state (since we are assuming that no additional failures are caused by the SIS program itself). Given a function $f$, we say that failure output state $(m', a')$ was \emph{caused by} input failure state $(m, a)$ if $\FunFail{f}(m, a)= (m', a')$.

In the case of  Fig.~\ref{SimpProgA}, the failure state $(\False, \True)$ can only be caused by input failure state $((m_1, m_2), (a_1, a_2))$ if either $m_1 < a_1$ or $m_2 < a_2$. Here the values $m_1, a_1$ correspond to the variable $i_1$ and $m_2, a_2$ correspond to the variable $i_2$ in the figure.  
In scenarios where e.g.\  $m_1< a_1$ there is always some reported value for $m_2$ such that the reported average $(m_1{+}m_2)/2$ is below the fixed threshold in $Gcom_K$, thus there exists a scenario satisfying the identified input constraints such that:
\[
\FunFail{T_{Avg}}((m_1, a_1), (m_2, a_2)) = (\False, \True)~.
\]

From this example we can see there are potentially infinitely many values for a failure  state $(m, a)$ whenever $m, a$ can take real values. Rather than a safety engineer needing to know these precise values, what is more relevant  is a report of the (usually) finite number of classes or \emph{modes} describing the kinds of failure.

\begin{definition}\label{d1611}
Given a set of states ${\cal V}{\times}{\cal V}$ wrt.\ a failure model, the \emph{failure modes} are defined by a partition ${\cal P}$ of ${\cal V}{\times}{\cal V}$. Each subset in ${\cal P}$ defines  a \emph{failure mode} (relative to ${\cal P}$).   Two states $(m, a)$ and $(m', a')$ satisfy the same failure mode if and only if they belong to the same partition subset of ${\cal P}$. 

Given a partition ${\cal P}$ defining a set of failure modes we define $\modes_{\cal P} : {\cal V}{\times}{\cal V} \rightarrow {\cal P}$ which maps failure states to their relevant failure mode (partition subset).
\end{definition}

Examples of failure modes are normally described by constraints on variables. For instance in Fig.~\ref{SimpProgA} the  failure modes for the initial failure state $((m_1, m_2), (a_1, a_2))$ are summarised by ``either $i_1$ is reading too low or $i_2$ is reading too low''.  In terms of Def.~\ref{d1611} 
this can be characterised by part of a partition that includes $\ell_1, \ell_2$ and $\ell$, where $\ell_1$ is the set of failure states such that $m_1<a_1 \land m_2\geq a_2$; $\ell_2$ is the set of failure states such that $m_1\geq a_1 \land m_2<a_2$ and $\ell$ is the set of failure states such that $m_1<a_1 \land m_2<a_2$.

Given an output failure mode, we would like to compute all initial failure modes that could cause that final failure mode. We say that an initial failure mode $\mathcal{e}$ (to an SIS function) \emph{causes} an output  failure mode $\mathcal{e}'$ (of an SIS function) if there exists a failure state satisfying $\mathcal{e}$ such that the output of the SIS function given that initial state satisfies $\mathcal{e}'$.  

For a given SIS function $f$, one way to do this is to compute all relevant failure states for $\langle f \rangle$, and then use $\modes_{\cal P}$ to interpret the failure modes for each failure state. Our first
observation is that, given a partition ${\cal P}$ defining the failure modes, 
we can simplify this procedure significantly by abstracting the behaviour of $f$ to act directly in terms of the failure modes rather than failure states.

\begin{definition}\label{d1639}
Let $f: {\cal V}\rightarrow {\cal V}$ be an SIS function, and 
${\cal P}$ be a partition of ${\cal V}^2$ defining the set of failure modes  as in Def.~\ref{d1611}. 

We define $\FunMode{f}{{\cal P}} ~: {\cal P} \rightarrow {\mathbb P}{\cal P}$ to be the \emph{failure mode abstraction} of $f$ as the (possibly nondeterministic) function satisfying the following constraint for any input $(m,a)\in {\cal V}^2$:
\[
{\modes_{\cal P}}\circ\FunFail{f}(m,a) \in \FunMode{f}{\cal P}\circ{\modes_{\cal P}}(m,a) ~.
\]
\end{definition}  

In  Fig.~\ref{SimpProgA}, where the initial failure modes are $\ell_1, \ell_2$ and $\ell$ explained above, and final failure modes are $\mathcal{f} = \{(\False, \True) \}$ and $\mathcal{t} = \{(\True, \False) \}$, we can see that $\FunMode{T_{Avg}}{\cal P}(\ell_1)$ contains $\mathcal{f}$, where we are writing ${\cal P}$ to represent the partition defined by all initial and final variables.\footnote{More precisely we would define failure modes separately on inputs and outputs, and indeed this is what happens in practice. To simplify the presentation however we assume that there is a single partition which serves to define failure modes on a single set, without distinguishing between inputs and outputs.}

We shall show below that there are a variety of functions that have well-defined failure mode abstractions. Our next task however, is to show that the abstraction defined by Def.~\ref{d1639} is compositional, i.e.\ the abstraction of  $f;g$ of SIS functions $f$ and $g$ can be computed from the composition of their abstractions. We recall the well-known Kleisli   lifting of set-valued \cite{Ref_212,Ref_213} functions as follows. We write the composition $f;g$ to mean first $f$ is executed, and  then $g$, or as functions the output from initial $s$ is $g(f(s))$. 

Let $\rho: \mathcal{T}\rightarrow {\mathbb P}\mathcal{T}$, define $\rho^\dagger: {\mathbb P}\mathcal{T}\rightarrow {\mathbb P}\mathcal{T}$
\begin{equation}\label{e1240}
 \rho^\dagger(K)  ~~~ \Defs ~~~ \bigcup_{k\in K}\rho(k)~.
\end{equation}

\begin{lemma}\label{l1106}
Let $f, g$ be SIS functions which have well-defined failure-mode abstractions as given by Def.~\ref{d1639}. The failure-mode abstraction for the composition $\FunMode{f;g}{\cal P}$ is equal to $\FunMode{g}{\cal P}^\dagger\circ \FunMode{f}{\cal P}$, where $\FunMode{g}{\cal P}^\dagger: {\mathbb P}{\cal P}\rightarrow  {\mathbb P}{\cal P}$ is the standard lifting set out at Eqn.~\ref{e1240} above.
\footnote{Recall that for simplicity we assume that the function modes ${\mathcal P}$ applies to both functions $f$ and $g$.}

\begin{proof} (Sketch)
We show, for any input $(m, a)$, that: 
\[
{\modes_{\cal P}}\circ\FunFail{g}\circ\FunFail{f}(m,a) \in \FunMode{g}{\cal P}^\dagger\circ\FunMode{f}{\cal P}\circ{\modes_{\cal P}}(m,a)~,
\]
and that all failure modes arise in this way.
  The result follows from Def.~\ref{d1639}, and standard manipulations of set-valued functions \cite{Ref_212,Ref_214}.
\end{proof}
\end{lemma}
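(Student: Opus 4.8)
The plan is to verify the defining inclusion of Def.~\ref{d1639} for the composite function $f;g$ with candidate abstraction $\FunMode{g}{\cal P}^\dagger\circ\FunMode{f}{\cal P}$, and then argue that this candidate is the \emph{smallest} such abstraction — equivalently, that no spurious failure modes are introduced. Concretely, fix an arbitrary input $(m,a)\in{\cal V}^2$ and write $\lbl=\modes_{\cal P}(m,a)$ for its input mode. Since $\FunFail{f;g}=\FunFail{g}\circ\FunFail{f}$ by unfolding Def.~\ref{d1018}, I first apply the abstraction property of $f$ to the state $(m,a)$: this gives $\modes_{\cal P}\circ\FunFail{f}(m,a)\in\FunMode{f}{\cal P}(\lbl)$. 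Call this intermediate mode $\lbl'$, and note that $(f(m),f(a))$ is a state realising it. Now apply the abstraction property of $g$ to the state $(f(m),f(a))$, whose mode is $\lbl'$: this yields $\modes_{\cal P}\circ\FunFail{g}(f(m),f(a))\in\FunMode{g}{\cal P}(\lbl')$. Chaining the two, $\modes_{\cal P}\circ\FunFail{g}\circ\FunFail{f}(m,a)\in\FunMode{g}{\cal P}(\lbl')\subseteq\bigcup_{\lbl''\in\FunMode{f}{\cal P}(\lbl)}\FunMode{g}{\cal P}(\lbl'')=\FunMode{g}{\cal P}^\dagger(\FunMode{f}{\cal P}(\lbl))$, which is exactly the required membership.

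For the converse direction — that every failure mode in $\FunMode{g}{\cal P}^\dagger\circ\FunMode{f}{\cal P}(\lbl)$ actually arises as the image of some realising state, so that the candidate is not larger than the true abstraction — I would unfold the lifting: any mode in the right-hand set lies in $\FunMode{g}{\cal P}(\lbl')$ for some $\lbl'\in\FunMode{f}{\cal P}(\lbl)$. Here I need the standing reading of Def.~\ref{d1639} that each value in $\FunMode{f}{\cal P}(\lbl)$ is \emph{witnessed}, i.e.\ there is a state of mode $\lbl$ whose $\FunFail{f}$-image has mode $\lbl'$, and likewise a state of mode $\lbl'$ whose $\FunFail{g}$-image has the chosen mode. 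The subtlety is to splice these two witnesses together: the $g$-witness is a state of mode $\lbl'$, but it need not be literally in the image of $f$. This is where the interpretation of ``causes'' in terms of \emph{existence} of a realising state (as stated before Def.~\ref{d1639}) does the work — modes are coarse enough that it suffices to produce \emph{some} state of the right mode at each stage, and one then checks the composite realiser lands in the target mode. This splicing step is the main obstacle, and it is exactly why the lemma is phrased in terms of failure \emph{modes} (a partition) rather than failure \emph{states}: at the level of states the analogous composition would fail, but the quotient by ${\cal P}$ absorbs the mismatch.

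Finally I would package the two inclusions: the forward inclusion shows $\FunMode{g}{\cal P}^\dagger\circ\FunMode{f}{\cal P}$ satisfies the constraint of Def.~\ref{d1639} for $f;g$, and the converse (together with the convention that the abstraction is determined by which modes are reachable) shows it coincides with $\FunMode{f;g}{\cal P}$. The remaining manipulations — associativity of Kleisli composition and the identity $\rho^\dagger\circ\sigma$ unfolding as a union over $\sigma$'s values from Eqn.~\ref{e1240} — are routine and can be cited as standard properties of set-valued functions \cite{Ref_212,Ref_214}, so I would not spell them out. The commuting-square diagram in the commented-out block is the right picture to include if space permits, since it makes the two-stage chase visually transparent.
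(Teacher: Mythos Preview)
Your proposal tracks the paper's sketch closely: both prove the forward inclusion by chaining the abstraction constraints for $f$ and then $g$, and both gesture at a converse (``all failure modes arise in this way'') before deferring to standard Kleisli facts. Your forward chase is correct and in fact more explicit than the paper's.

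A caution on the converse paragraph. You are right that splicing the two witnesses is the obstacle, but your proposed resolution --- that ``the quotient by ${\cal P}$ absorbs the mismatch'' --- does not hold under the minimal-abstraction reading of Def.~\ref{d1639}. Take $f(x)=e^x$, $g(x)=x^2$ with the partition $\{\mathcal{h},\mathcal{m},\mathcal{l}\}$ of Def.~\ref{d1515}. Then $\FunMode{f}{\cal P}(\mathcal{h})=\{\mathcal{h}\}$, and $\mathcal{l}\in\FunMode{g}{\cal P}(\mathcal{h})$ (witnessed by any pair $0>m'>a'$), so $\mathcal{l}\in\FunMode{g}{\cal P}^\dagger\circ\FunMode{f}{\cal P}(\mathcal{h})$; yet $g\circ f(x)=e^{2x}$ is strictly increasing, so $\FunMode{f;g}{\cal P}(\mathcal{h})=\{\mathcal{h}\}$ only. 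The $g$-witness here has both coordinates negative and is genuinely outside the range of $\FunFail{f}$, and no coarsening by ${\cal P}$ repairs that. The paper's sketch does not address this either. The workable reading --- consistent with Def.~\ref{d1639} being stated only as an inclusion constraint --- is that the abstraction is determined only up to over-approximation; then the forward inclusion is all that is needed, and the lemma asserts that the Kleisli composite is \emph{a} valid failure-mode abstraction for $f;g$. Under that reading your converse paragraph can simply be dropped.
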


The failure mode abstractions $\FunMode{f}{\cal P}$ enable a significant simplification in the identification of possible failures in an SIS program. For example we shall see that $\FunMode{T_{Avg}}{\cal P} = \FunMode{Gcom_K}{\cal P}^\dagger\circ\FunMode{Avg}{\cal P}$ for abstractions of the function blocks  $\FunMode{Gcom_K}{\cal P}$ and $\FunMode{Avg}{\cal P}$.

In general a safety analyst considers possible output failure modes and asks for the inputs that potentially cause them.  In some circumstances some failure modes can never be satisfied by any input, and are deemed \emph{unreachable}.  The analyst is thus able to concentrate on \emph{reachable} failure modes, defined next. 

\begin{definition}\label{d1227}
Given an SIS function $f$, and abstraction defined by Def.~\ref{d1639}. A failure mode $\mathcal{m}\in {\cal P}$ is \emph{reachable} (wrt.\ $f$) if there is some input failure state $(i, i')$ such that ${\modes_{\cal P}}\circ \FunFail{f}(i, i')= \mathcal{m}$.
\end{definition}

Failure Mode Reasoning is based on backwards calculational reasoning. We use a weak transformer to compute all input failure modes which can possibly cause a given output failure mode. This is similar to the dual transformer of dynamic logic \cite{Ref_223} and the conjugate transformer \cite{Ref_215} for the well-known guarded command language \cite{Ref_216}.

\begin{definition}\label{d1230}
Given SIS function%
\footnote{We do not treat non-termination nor partial functions.}
$f$, we define the \emph{inverse failure transformer} ${\RevFunMode{f}{\cal P}}: \mathbb{P}{\cal P}\rightarrow \mathbb{P}{\cal P}$ as 
\begin{equation}\label{e1601}
{\RevFunMode{f}{\cal P}}(K) ~~\Defs ~~ \{\mathcal{k} ~| ~  \FunMode{f}{\cal P}(\mathcal{k}) \cap K\neq \phi \}~.
\end{equation}
\end{definition}

Def.~\ref{d1230} satisfies two properties. The first is that any initial failure modes computed from final failure modes
are the ones that could cause the selected final failure modes.
 The second is that inverse failure transformers compute all initial failure modes from final reachable failure modes. The next two definitions formalise these properties.

\begin{definition}\label{d1619}
Given SIS function $f$, we say an inverse failure transformer $t$ is \emph{sound} wrt.\ $f$ if all 
$\mathcal{k} \in t(\mathcal{K})$ implies $\FunMode{f}{\cal P}(\mathcal{k}) \cap \mathcal{K} \neq \phi$.
\end{definition}

\begin{definition}\label{d1531}
Given  SIS function $f$, we say an inverse failure transformer $t$ is \emph{complete} if for any set of reachable failure modes ${\cal F}$ and (initial) failure modes ${\cal I}$, we have the following:
\begin{equation}\label{e1610}
{\cal I} \subseteq t({\cal F})~~~~ \Leftrightarrow ~~~~ (\forall \mathcal{i}\in \mathcal{I} \cdot \FunMode{f}{\cal P}({\cal i}) \cap {\cal F} \neq \phi)~.
\end{equation}
\end{definition}

Observe that given failure modes $\mathcal{m}$  and $\mathcal{m}'$ such that $\mathcal{m} \in t\{\mathcal{m}'\}$, then $\mathcal{m}'$ is reachable if there is some $(i, i')$  such that 
${\modes}_{\cal P}(i, i')= \mathcal{m}$. In general the safety engineer is not concerned with ``unrealistic''  failure modes in the sense that  no corresponding scenario comprised of failure states can be constructed.

It is clear from Def.~\ref{d1230} that $\RevFunMode{f}{\cal P}$ is a sound and complete transformer relative to $f$.
The definition of completeness is important because it means, for the safety engineer, that all potential failure modes are accounted for by the abstraction. The next lemma records the fact that soundness and completeness is conserved by function composition. 

\begin{lemma}\label{l1231}
Let $f, g$ be SIS functions, and let $\mathcal{P}$ determine the failure modes so that $\RevFunMode{f}{\mathcal{p}}$ and $\RevFunMode{g}{\mathcal{p}}$ are sound and complete transformers.  Then their composition $\RevFunMode{f}{\mathcal{p}}\circ \RevFunMode{g}{\mathcal{p}}$ is also sound and complete for the composition SIS function $f;g$.
\begin{proof}
Follows from Def.~\ref{d1230} and standard facts about functions and their transformers \cite{Ref_212,Ref_215}. 
\end{proof}
\end{lemma}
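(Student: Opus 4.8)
The plan is to establish the slightly stronger fact that $\RevFunMode{f}{\cal P}\circ\RevFunMode{g}{\cal P}$ is \emph{literally equal} to $\RevFunMode{f;g}{\cal P}$, the inverse failure transformer of the composite SIS function obtained by applying Def.~\ref{d1230} directly to $f;g$. Since that transformer is sound and complete relative to $f;g$ (as noted just after Def.~\ref{d1531}, this is immediate from Def.~\ref{d1230}), the lemma then follows at once. Two small sanity points fix the shape of the argument: the backwards reading of $f;g$ means that to recover the initial modes that can cause a given final mode one first pulls the final modes back through $g$ and then pulls that result back through $f$, which is exactly the order $\RevFunMode{f}{\cal P}\circ\RevFunMode{g}{\cal P}$; and, by hypothesis, $\RevFunMode{f}{\cal P}$ and $\RevFunMode{g}{\cal P}$ are the transformers of Def.~\ref{d1230}, so Lemma~\ref{l1106} applies to relate $\FunMode{f;g}{\cal P}$ to $\FunMode{f}{\cal P}$ and $\FunMode{g}{\cal P}$.

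For the equality itself I would fix $K\in\mathbb{P}{\cal P}$ and a mode $\mathcal{k}$ and simply unfold. By Lemma~\ref{l1106} together with the definition of the lifting at Eqn.~\ref{e1240},
\[
\FunMode{f;g}{\cal P}(\mathcal{k}) ~=~ \FunMode{g}{\cal P}^\dagger\bigl(\FunMode{f}{\cal P}(\mathcal{k})\bigr) ~=~ \bigcup_{j\,\in\,\FunMode{f}{\cal P}(\mathcal{k})}\FunMode{g}{\cal P}(j)~.
\]
Therefore $\FunMode{f;g}{\cal P}(\mathcal{k})\cap K\neq\phi$ holds precisely when some $j\in\FunMode{f}{\cal P}(\mathcal{k})$ satisfies $\FunMode{g}{\cal P}(j)\cap K\neq\phi$, i.e.\ when $\FunMode{f}{\cal P}(\mathcal{k})$ meets $\RevFunMode{g}{\cal P}(K)$, i.e.\ when $\mathcal{k}\in\RevFunMode{f}{\cal P}\bigl(\RevFunMode{g}{\cal P}(K)\bigr)$; comparing with Def.~\ref{d1230} applied to $f;g$ gives $\RevFunMode{f;g}{\cal P}(K)=(\RevFunMode{f}{\cal P}\circ\RevFunMode{g}{\cal P})(K)$, and since $K$ was arbitrary the two transformers coincide.

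A direct alternative, if one prefers not to name $\RevFunMode{f;g}{\cal P}$, is to verify soundness and completeness separately: for soundness, chain Def.~\ref{d1619} for $\RevFunMode{f}{\cal P}$ and then for $\RevFunMode{g}{\cal P}$ and repackage the witness through the displayed union; for completeness, run the same chain but keep both directions of the biconditional Eqn.~\ref{e1610}, being careful that what is assumed reachable (wrt.\ $f;g$) is ${\cal F}$ and checking that the intermediate set $\RevFunMode{g}{\cal P}({\cal F})$ then interacts correctly with $\FunMode{f}{\cal P}$. I expect the one place that wants a moment's care is exactly this reachability bookkeeping in the completeness direction --- and the merit of the equality route above is that it disposes of it automatically. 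Otherwise there is no real obstacle: the whole content is the quantifier shuffle across the Kleisli lifting $(\cdot)^\dagger$ furnished by Lemma~\ref{l1106}, and the argument is the usual one for composing conjugate/relational backward transformers \cite{Ref_215}.
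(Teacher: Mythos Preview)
Your proposal is correct and follows essentially the same approach as the paper, which simply appeals to Def.~\ref{d1230} and ``standard facts about functions and their transformers'' without spelling anything out; your argument is precisely the natural unpacking of those standard facts, proving the equality $\RevFunMode{f;g}{\cal P}=\RevFunMode{f}{\cal P}\circ\RevFunMode{g}{\cal P}$ via Lemma~\ref{l1106} and the Kleisli lifting, from which soundness and completeness are immediate.
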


In this section we have set out a formal methods treatment of failure modes for SIFs in SIS programs. We have demonstrated a simple model for failures and shown how this ``application-oriented" approach supports a rigorous analysis of failure modes and how they are propagated in SIS programs. In the following sections we show how this can be used to justify the use of standard backwards-reasoning to compute all input failure modes that cause reachable failure modes.

\section{Failure mode reasoning}\label{fmrsec}

In this section we show how to apply the failures model introduced in Section\ref{s1245} to the typical safety analysis.

Recall $T_{Avg}$ defined in Fig.~\ref{SimpProgA}. In this example, the failure modes of interest relate to whether the readings of the various sensors accurately record the physical environment or not, and when they do not, which combinations of deviant readings have the potential to result in a hazard. 

The safety analysis begins with the identification of hazardous outputs: these are outputs from the SIS program which would directly cause a hazard if it is not correct, in the sense that it deviates from the ``true'' result which would have been output had all the sensors accurately recorded the status of the plant.

For simplicity we assume that all readings are real-valued, thus we identify ``True'' with ``1'' and ``False'' with ``0''. Following Def.~\ref{d1611} we set ${\cal V} = \mathbb{R}$ and identify a partition on $\mathbb{R}{\times}\mathbb{R}$ given as follows.

\begin{definition}\label{d1515}
Define the \emph{failures partition} as follows. Let $\mathcal{h}, \mathcal{l}, \mathcal{m}$ respectively partition $\mathbb{R}{\times}\mathbb{R}$ defined by:
\[
(r, r')\in \mathcal{h} ~~\textit{iff}~~ r{>}r'~~;~~
(r, r')\in \mathcal{m} ~~\textit{iff}~~ r{=}r'~~;~~
(r, r')\in \mathcal{l} ~~\textit{iff}~~ r{<}r'~.
\]
\end{definition}

Here we have identified the common failure modes ``reading too high'', corresponding to $\mathcal{h}$ and ``reading too low'' corresponding to $\mathcal{l}$. We have also included ``reading correct'' corresponding to $\mathcal{m}$ which is not strictly speaking a ``failure'', but is useful in the formal analysis. From our gas pressure example, the situation where the input recorded on $\Var{i}_1$ is lower than the real pressure in the pipe is modelled by pairs of values that lie in $\mathcal{l}$.

Safety engineers want to know the input failure modes that ``cause" particular reachable output failures. Def.~\ref{d1531} and Lem.~\ref{l1231}  above support a standard backwards reasoning method on failure modes directly.

For each variable $s$ in an SIS program we use $\FVal{s}$ for a corresponding variable taking failure modes for values, which in this case is  
 $\{\mathcal{h}, \mathcal{l}, \mathcal{m}\}$.

\begin{definition}\label{d1334}
Given an SIS function $f$ and a partition ${\cal P}$ defining the failure modes. A \emph{failure triple} is written
\begin{equation}\label{e1340}
\Scenario{\FVal{s}\in \mathcal{A}}{f}{\FVal{s}'\in \mathcal{A}'}~,
\end{equation}
where $\mathcal{A}, \mathcal{A}'\subseteq \{\mathcal{h}, \mathcal{l}, \mathcal{m}\}$. The triple Eqn.~\ref{e1340} is \emph{valid} if, for each failure mode $\mathcal{e} \in \mathcal{A}$ there exists $(m, m')$ such that ${\modes}_{\cal P}(m, m') = \mathcal{e}$ and ${\modes}_{\cal P}(\FunFail{f}(m, m'))\cap\mathcal{A}'\neq \phi$.

Note that as a special case where $\mathcal{A}$ is a singleton set $\{\mathcal{a} \}$ we write ``$\FVal{s}=\mathcal{a}$" rather than  ``$\FVal{s}\in \mathcal{A}$".
\end{definition}

Def.~\ref{e1340} is reminiscent of a standard Hoare Triple for failure modes, however a failure triple is based on Def.~\ref{d1230}.  More importantly Def.~\ref{e1340} corresponds with the scenarios relevant for the assessment of failures. Whenever $f$ corresponds to an SIS function for example, the valid triple given by Eqn.~\ref{e1340} means that the initial failure mode corresponding to $\mathcal{a}$ \emph{causes} the final failure mode $\mathcal{a}'$. This effectively enables the identification of failure mode propagation, summarised in the next result.

\begin{theorem}\label{t1554}
Let $f$ be an SIS function and ${\cal P}$ define the relevant failure modes. Let $\mathcal{a}'$ be a reachable final failure mode wrt.\ $f$. Then for all $\mathcal{a}\in {\RevFunMode{f}{\cal P}}\{\mathcal{a}'\}$
\[
\Scenario{\FVal{s}=\mathcal{a}}{f}{\FVal{s}'=\mathcal{a}'}
\]
is a valid failure triple.
\begin{proof}
Definition of ${\RevFunMode{f}{\cal P}}$, Def.~\ref{d1230}.
\end{proof}
\end{theorem}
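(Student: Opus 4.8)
The plan is to unfold the definitions on both sides and check they match. Specifically, fix a reachable final failure mode $\mathcal{a}'$ and take any $\mathcal{a}\in {\RevFunMode{f}{\cal P}}\{\mathcal{a}'\}$. By Def.~\ref{d1230}, membership $\mathcal{a}\in {\RevFunMode{f}{\cal P}}\{\mathcal{a}'\}$ means precisely that $\FunMode{f}{\cal P}(\mathcal{a})\cap\{\mathcal{a}'\}\neq\phi$, i.e.\ $\mathcal{a}'\in\FunMode{f}{\cal P}(\mathcal{a})$. I would then expand the definition of validity of the failure triple from Def.~\ref{d1334}: the triple $\Scenario{\FVal{s}=\mathcal{a}}{f}{\FVal{s}'=\mathcal{a}'}$ is valid iff there exists $(m,m')$ with ${\modes}_{\cal P}(m,m')=\mathcal{a}$ and ${\modes}_{\cal P}(\FunFail{f}(m,m'))\cap\{\mathcal{a}'\}\neq\phi$, i.e.\ ${\modes}_{\cal P}(\FunFail{f}(m,m'))=\mathcal{a}'$.

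So the task reduces to producing such a witness pair $(m,m')$. First I would use reachability: since $\mathcal{a}'$ is reachable wrt.\ $f$ (Def.~\ref{d1227}), there is \emph{some} input failure state $(i,i')$ with ${\modes}_{\cal P}\circ\FunFail{f}(i,i')=\mathcal{a}'$. However, that witness realises $\mathcal{a}'$ from whatever failure mode $(i,i')$ happens to sit in, not necessarily from $\mathcal{a}$. The real work is to get a witness sitting inside $\mathcal{a}$. For that I would invoke the defining constraint of the failure mode abstraction (Def.~\ref{d1639}): ${\modes}_{\cal P}\circ\FunFail{f}(m,m')\in\FunMode{f}{\cal P}\circ{\modes}_{\cal P}(m,m')$ for every $(m,m')$. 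Combined with the fact that $\FunMode{f}{\cal P}$ is the abstraction of $f$ and $\mathcal{a}'\in\FunMode{f}{\cal P}(\mathcal{a})$, I need some $(m,m')$ with ${\modes}_{\cal P}(m,m')=\mathcal{a}$ that actually attains $\mathcal{a}'$ under the map ${\modes}_{\cal P}\circ\FunFail{f}$ — and here one should read $\FunMode{f}{\cal P}$ as being determined by $f$ (as in the commuting-diagram version in the commented-out code), so that every element of $\FunMode{f}{\cal P}(\mathcal{a})$ is witnessed by some concrete state in $\mathcal{a}$. That yields exactly the pair required, and the validity condition of Def.~\ref{d1334} is discharged.

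The main obstacle I anticipate is precisely the quantifier alternation hidden in ``$\FunMode{f}{\cal P}$ is the abstraction of $f$'': Def.~\ref{d1639} as literally stated only guarantees that each concrete state's image is \emph{contained} in $\FunMode{f}{\cal P}$ applied to its mode, which gives an over-approximation and not immediately a witness in the other direction. To get the existential witness one really needs the tighter reading that $\FunMode{f}{\cal P}(\mathcal{k})$ equals the set $\{{\modes}_{\cal P}\circ\FunFail{f}(m,m') \mid {\modes}_{\cal P}(m,m')=\mathcal{k}\}$, i.e.\ the \emph{smallest} function satisfying the constraint (this is the canonical choice, and is what ${\RevFunMode{f}{\cal P}}$ is implicitly built on via Def.~\ref{d1230}). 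Once that reading is fixed, the proof is a one-line composition of definitions exactly as the author's proof sketch indicates (``Definition of ${\RevFunMode{f}{\cal P}}$, Def.~\ref{d1230}''); the subtlety is entirely in recognising that the over-approximation subtlety does not actually bite here because the relevant $\FunMode{f}{\cal P}$ is the canonical one. I would therefore state this canonical-choice assumption explicitly (or cite where it was fixed) and then close the argument.
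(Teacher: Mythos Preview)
Your proposal is correct and follows the same approach as the paper's one-line proof (unfold Def.~\ref{d1230} and Def.~\ref{d1334}); you have simply made explicit the quantifier subtlety about $\FunMode{f}{\cal P}$ being the canonical (smallest) abstraction, which the paper leaves implicit. Your diagnosis that the argument needs this tighter reading of Def.~\ref{d1639} is accurate, and once granted the proof is indeed immediate from the definitions as the paper indicates.
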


\paragraph{Backwards reasoning for failure modes:}
As mentioned above we can use Thm.~\ref{t1554} to compute the failure modes that are the cause of a given reachable final failure mode. 
A complex SIS program determining a SIF typically comprises multiple function blocks with clearly defined ``input'' variables and ``output'' variables, where
the outputs are determined by the values on the inputs. The architecture of the SIS program is then equated with a composition of a series of
function blocks. Now that we have a formal description in terms of failure triples, we are able to use the standard composition rule: 

\[
\begin{array}{l}
\Scenario{\FVal{s}=\mathcal{a}}{f_1}{\FVal{s_1}=\mathcal{b}} ~\land~ \Scenario{\FVal{s_1}=\mathcal{b}}{f_2}{\FVal{s}'=\mathcal{a}'}\\ ~~\Rightarrow~~
\Scenario{\FVal{s}=\mathcal{a}}{f_1;f_2}{\FVal{s}'=\mathcal{a}'}  ~.
\end{array}
\]

From this we can now deduce failure triples of a complex SIS program by reasoning about failure triples for component function blocks. We illustrate this for $Avg$ and $Gcom_K$ in the next section.

\section{Individual function blocks}\label{rfbssec}
A typical SIS program library, from which function blocks (FBs) are chosen, may include 100 types of FBs \cite{Ref_219}. For each FB the relationships between FB input failure modes and FB output failure modes, can be summarised in a Failure Mode Block (FMB). An FMB is proposed based on the well-defined function of its corresponding FB. In this section we will propose FMBs for SIS functions $Avg$ and $Gcom_K$, which we used in our gas pressure example, and we will prove the soundness and completeness of the proposed FMBs. More sample FMBs are proposed and proven in the Appendix.

The $Avg$ function block takes two inputs and computes the average. The relevant output failures therefore are whether the output reads too high or too low. The abstraction for failure modes is given below.

\begin{definition}\label{Avg}
Let $Avg$ be the function defined by: $Avg(i_1, i_2) \Defs (i_1{+}i_2)/2$. Its associated FMB, FAvg, is defined as follows:
\[
\begin{array}{ccc}
\{ \FVal{i}_1={\mathcal{h}}\lor \FVal{i}_2={h} \}&~~Avg~~& 
\{\FVal{o}= \mathcal{h}\}\\
\{ \FVal{i}_1=\mathcal{l}\lor \FVal{i}_2=\mathcal{l} \}&~~Avg~~& 
\{\FVal{o}= \mathcal{l}\}\\\\
\left\{\begin{array}{l}
 \FVal{i}_1=\mathcal{h} \land\FVal{i}_2=\mathcal{l} \lor\\
\FVal{i}_1=\mathcal{l} \land\FVal{i}_2=\mathcal{h} \lor \\
\FVal{i}_1=\mathcal{m} \land\FVal{i}_2=\mathcal{m} 
\end{array}\right\}
&~~ Avg~~& 
\{\FVal{o}= \mathcal{m}\}
\end{array}
\]
\end{definition}

Def.~\ref{Avg} tells us that if the output reads too high, then it must be because one of the two inputs also reads too high. Similarly, if the output reads too low then it can only be because one of the two inputs reads too low. On the other hand the output can deliver an accurate result for scenarios where one input reads too high and the other reads too low.  At the qualitative level of abstraction, however, all of these possibilities must be accounted for.

$Gcom_K$ is another typical function block which compares the input with a given threshold and reports whether the input meets the given threshold.

\begin{definition}\label{Gcom}
Let $Gcom_K$ be the function defined by: $Gcom_K(i) \Defs (i > K)$. Its associated FMB, FGcom, is defined as follows:
\[
\begin{array}{ccc}
\{ \FVal{i}=\mathcal{h}\}&~~Gcom_K~~& 
\{\FVal{o}= \mathcal{t}\}\\
\{ \FVal{i}=\mathcal{l} \}&~~Gcom_K~~& 
\{\FVal{o}= \mathcal{f}\}\\
\{ \FVal{i}=\mathcal{l} \lor \FVal{i}=\mathcal{m} \lor \FVal{i}=\mathcal{h}\} 
&~~ Gcom_K~~& 
\{\FVal{o}= \mathcal{m}\}
\end{array}
\]
\end{definition}

Def.~\ref{Gcom} tells us that the output reading $\False$ when it should read $\True$ can only happen when the input is delivering a lower value than it should, and similarly the output reading $\True$ when it should read $\False$ can only happen when  the input reading is falsely reporting a high value. Notice that this definition is actually independent of $K$, which is why $K$ is suppressed in the FMB model. 

The following theorem confirms that Def.~\ref{Avg} and Def.~\ref{Gcom} are sound and complete in respect of their operational definitions.

\begin{theorem}\label{CorrComp}
The $FAvg$ and $FGcom$ models Definitions Def.~\ref{Avg} and Def.~\ref{Gcom} are the sound and complete failure models of $Avg$ and $Gcom_K$ (for all real-valued $K$).
\end{theorem}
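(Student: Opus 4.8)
The plan is first to unpack what ``sound and complete failure model'' means in terms of the machinery already set up, and then to verify the claim by a short finite case analysis. Recall that a failure triple $\Scenario{\FVal{s}=\mathcal{a}}{f}{\FVal{s}'=\mathcal{a}'}$ is \emph{valid} (Def.~\ref{d1334}) exactly when the input mode $\mathcal{a}$ genuinely causes the output mode $\mathcal{a}'$, i.e.\ some failure state of mode $\mathcal{a}$ is mapped by $\FunFail{f}$ to one of mode $\mathcal{a}'$; reading each row of $FAvg$ and $FGcom$ as a disjunction of such triples, the block is \emph{sound} iff every listed triple is valid, and \emph{complete} iff, for every reachable output mode $\mathcal{o}$, the precondition of its row contains \emph{all} input modes that cause $\mathcal{o}$ --- equivalently, the induced $\FunMode{f}{\cal P}$ satisfies Def.~\ref{d1639} and is tight, so that $\RevFunMode{f}{\cal P}$ is sound and complete in the sense of Defs.~\ref{d1619} and \ref{d1531}. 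Before checking this I would fix the partitions: for $Gcom_K$ the input failure modes are $\{\mathcal{h},\mathcal{m},\mathcal{l}\}$ and, identifying $\True$ with $1$ and $\False$ with $0$, the output failure modes are the restriction of Def.~\ref{d1515} to $\{0,1\}{\times}\{0,1\}$, namely $\mathcal{t}$ (output reads $1$, should read $0$), $\mathcal{f}$ (reads $0$, should read $1$) and $\mathcal{m}$; for $Avg$, whose input state space is $\mathbb{R}^2$, the input failure modes are the nine product modes of $\{\mathcal{h},\mathcal{m},\mathcal{l}\}^2$ (one mode per coordinate, as in Def.~\ref{d1515}), and the output failure modes are $\{\mathcal{h},\mathcal{m},\mathcal{l}\}$.

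The observation that collapses the case analysis is elementary. For $Avg$, writing $d_j \Defs m_j - a_j$ for the deviation of input $j$ in a failure state $((m_1,m_2),(a_1,a_2))$, the output failure state is $((m_1{+}m_2)/2,(a_1{+}a_2)/2)$ and its mode is determined solely by $\mathrm{sign}(d_1{+}d_2)$, while the input mode on coordinate $j$ says precisely $d_j>0$, $d_j{=}0$ or $d_j<0$. For $Gcom_K$ the output of a failure state $(m,a)$ is $(m{>}K,\,a{>}K)$, so output mode $\mathcal{t}$ holds iff $m>K\ge a$, mode $\mathcal{f}$ iff $a>K\ge m$, and mode $\mathcal{m}$ iff $(m{>}K)\Leftrightarrow(a{>}K)$; in particular $\mathcal{t}$ forces $m>a$ and $\mathcal{f}$ forces $m<a$. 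For the soundness half I would then, for each row and each input mode $\mathcal{e}$ in its precondition, exhibit a witness of mode $\mathcal{e}$ whose image has the row's output mode: for $Avg$ choose deviations with the prescribed signs and the prescribed sum (e.g.\ $(d_1,d_2)=(2,-1)$ realises $(\FVal{i}_1{=}\mathcal{h}){\land}(\FVal{i}_2{=}\mathcal{l})\rightarrow \FVal{o}{=}\mathcal{h}$, and $(1,-1)$ realises the same input mode $\rightarrow \FVal{o}{=}\mathcal{m}$), and for $Gcom_K$ choose inputs straddling $K$ appropriately (e.g.\ $m=K{+}1$, $a=K$ for $\FVal{i}{=}\mathcal{h}\rightarrow \FVal{o}{=}\mathcal{t}$). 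Since such witnesses exist for every real $K$, this simultaneously discharges the ``for all real-valued $K$'' clause and shows, as a by-product, that all three output modes are reachable (Def.~\ref{d1227}) and that every input mode occurs in some row, so $\FunMode{f}{\cal P}$ is total.

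For the completeness half I would argue the contrapositive: for each output mode $\mathcal{o}$ and each input mode $\mathcal{e}$ \emph{absent} from the precondition of the $\mathcal{o}$-row, show no failure state of mode $\mathcal{e}$ can map to one of mode $\mathcal{o}$, because the sign constraints carried by $\mathcal{e}$ contradict the one forced by $\mathcal{o}$ --- e.g.\ $d_1>0 \land d_2>0$ cannot yield $d_1{+}d_2\le 0$, so $(\mathcal{h},\mathcal{h})$ cannot cause $\FVal{o}{=}\mathcal{l}$ or $\FVal{o}{=}\mathcal{m}$; and $m<a$ cannot yield output mode $\mathcal{t}$, so $\mathcal{l}$ cannot cause $\mathcal{t}$. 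Combined with the soundness half this shows that the precondition of each row equals ${\RevFunMode{f}{\cal P}}\{\mathcal{o}\}$ for its output mode $\mathcal{o}$, which by Defs.~\ref{d1230}, \ref{d1619} and \ref{d1531} is exactly soundness and completeness of $FAvg$ and $FGcom$; Lem.~\ref{l1231} then lifts this to $T_{Avg}=Avg\,;Gcom_K$. I do not anticipate a genuine obstacle: the content is the elementary sign arithmetic above, and the only points needing care are the bookkeeping ones flagged in the first paragraph --- treating the failure partition on $(\mathbb{R}^2)^2$ for $Avg$ as the coordinatewise product of Def.~\ref{d1515}, and recording that the boolean output of $Gcom_K$ only ever realises the restricted modes $\mathcal{t},\mathcal{f},\mathcal{m}$.
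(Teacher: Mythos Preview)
Your proposal is correct and takes essentially the same approach as the paper: a finite case analysis over all input failure-mode combinations, verifying for each output mode exactly which input modes can and cannot produce it. The paper presents this as explicit truth tables (nine rows for $Avg$, eight for $Gcom_K$) in the Appendix, whereas you organise the same check via the sign-of-deviation observation $\mathrm{sign}(d_1{+}d_2)$ and the straddling condition on $K$; the content is identical. (The closing remark about lifting to $T_{Avg}$ via Lem.~\ref{l1231} is not part of this theorem and can be dropped.)
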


\begin{proof}
Individual FMBs can be proven by using truth-tables. All possible combinations of faults at the inputs and outputs of a corresponding FB can be defined, based on which the soundness and completeness conditions can be examined. Detailed proof is given in Appendix.
\end{proof}

\section{FMR in practice}\label{rflsec}

The FMR process consists of four main stages: composition, substitution, simplification and calculation. In the composition stage, FMBs and failure mode variables are defined and connected in accordance with the SIS program. The model for our example SIS program (Fig.~\ref{SimpProgA}) will include two FMBs: $FAvg$ and $FGcom$. Similarly, variables $\Var{o}, \Var{w}, \Var{i}_{1}$ and $\Var{i}_{2}$ in SIS program will have their own corresponding failure mode variables $\FVal{o}, \FVal{w}, \FVal{i}_{1}$ and $\FVal{i}_{2}$ in the model.

The reasoning process begins at the last FB, i.e. the one that produces the SIS output. In our gas pressure example, the given output fault is $\FVal{o}=\mathcal{f}$. Taking into account the function of $Gcom_K$ from Def.~\ref{Gcom}, we can say:
\begin{equation}\label{GPS1}
\{\FVal{ w }=\mathcal{l} \}~~Gcom_K~~ 
\{\FVal{o}= \mathcal{f}\}\
\end{equation}

Statement (\ref{GPS1}) suggests that output $\Var{o}$ being $\False$ by fault implies that the input to the greater comparison FB, $\Var{ w }$, is reading lower than what it should. 

The reasoning process continues through the SIS program until all the conclusion parts of the implication statements include no more intermediate variables. In our example, the next FB is $Avg$. Considering the function of $Avg$, if the fault $\FVal{w}=\mathcal{l}$ occurs at its output, we can conclude that from Def.~\ref{Avg}:
\begin{equation}\label{GPS2}
\{ \FVal{i}_1=\mathcal{l}\lor \FVal{i}_2=\mathcal{l} \}~~Avg~~ 
\{\FVal{ w }= \mathcal{l}\}
\end{equation}

This statement suggests that if the reported value at $\Var{w}$ is lower than its intended value, then either input $\Var{i}_{1}$ or $\Var{i}_{2}$ may be reading lower. The reasoning sequence terminates here as the left hand side of (\ref{GPS2}) only includes SIS inputs.

In the second stage of FMR we use the logical composition rules to eliminate intermediate variables in order to reduce the set of FB failure reasons to only one relation that links SIS inputs to its outputs.  In our example, the only internal variable is $\FVal{w}$. By substituting (\ref{GPS2}) in (\ref{GPS1}) we can conclude: 
\begin{equation}\label{GPS3}
\{ \FVal{i}_1=\mathcal{l}\lor \FVal{i}_2=\mathcal{l} \}~~Avg;Gcom_K~~ 
\{\FVal{o}= \mathcal{f}\}
\end{equation} which is very similar to the result (\ref{RevInfSt}) of our earlier informal description of FMR.

The third stage of FMR is simplification, where we use standard rules of propositional logic \cite{Ref_226} to simplify (\ref{GPS3}) and create the FMR short list of failure triples. As (\ref{GPS3}) is already minimal, we can easily see that our short list of faults comprises  $\FVal{i}_1=\mathcal{l}$ and  $\FVal{i}_2=\mathcal{l}$.

Having the input failure modes identified, we can implement the last stage of analysis, calculation, in which we would assign probability values to individual failure events and calculate the overall probability of failure. We skip this stage for this simple example. A comprehensive safety analysis for a realistic case study is described in other work \cite{Ref_200}. 

In a more recent project \cite{Ref_225} we examined a larger case study where we integrated FMR with other model-base analysis methods HiP-HOPS \cite{Ref_140} and CFT \cite{Ref_143}. We demonstrated that not only is FMR able to handle larger examples with precision, but its output can also be of value to other safety analysis tools that are designed to model generic systems but not programs. The process we examined in this case study is briefly shown in Fig. \ref{Fig_SIF_Config}: a SIS that protects a gas-fired industrial boiler against high level of water. The SIS program in this example consists of over 2170 function blocks. With close to 100 inputs and over 25 outputs, the SIS performs a total of 34 safety functions (SIFs). The SIS program in this project was developed in FBD and saved in XML format.
 
\begin{figure}[!ht]
\includegraphics[scale=0.4]{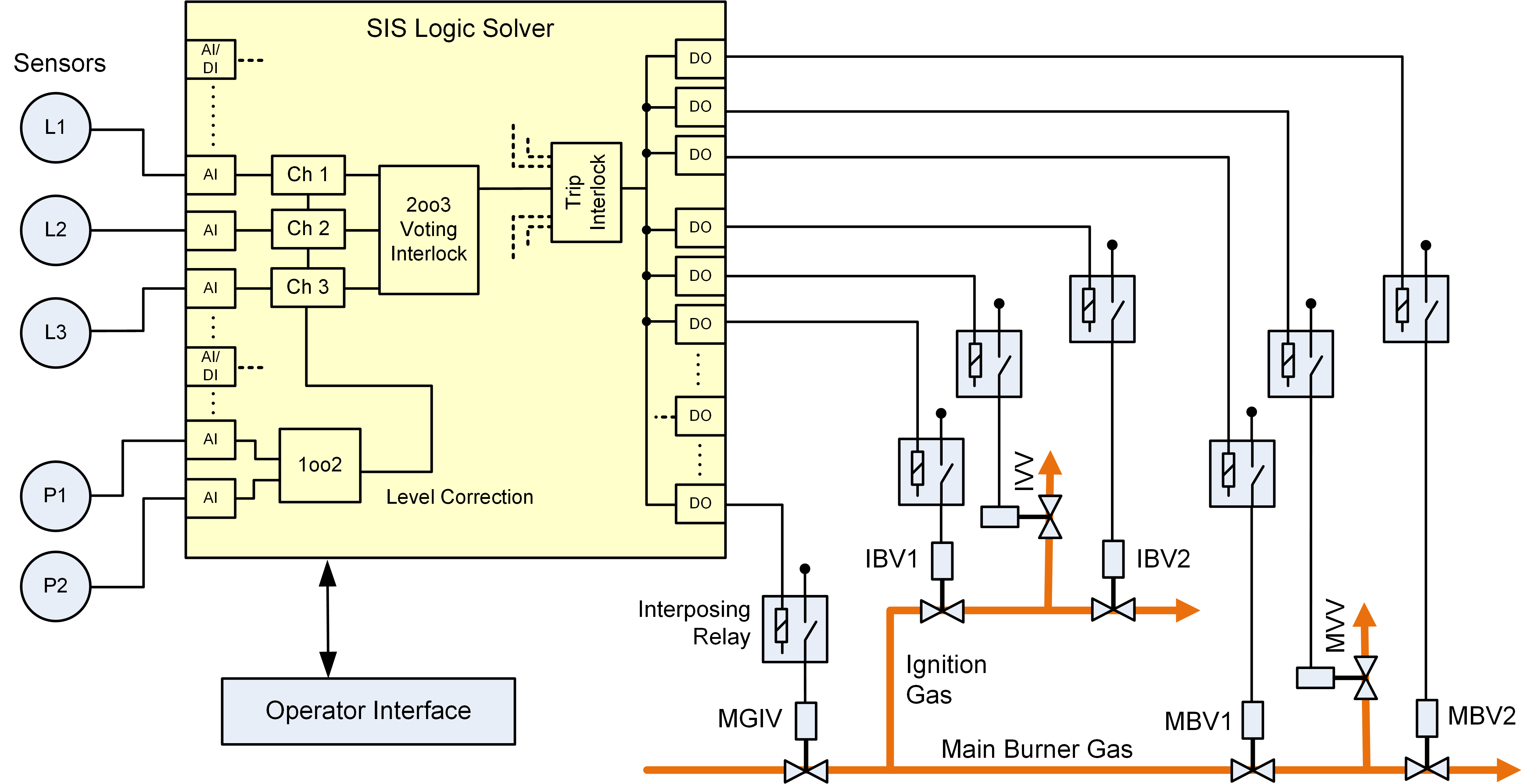}
\centering
\caption{SIS configuration}
\label{Fig_SIF_Config}
\end{figure}

The FMR analysis produced two short lists of failure modes, one for Dangerous Undetected (DU) failure and one for Spurious Trip (ST). The lists included a total of 39 failure scenarios. In the quantitative stage the failure data of SIS inputs were entered and the aggregated probability measures for DU and ST failures were calculated.

Provided that failure data are readily available, the whole analysis process for an SIS of this scale takes less than an hour to complete, using the experimental system incorporating FMR analysis \cite{Ref_225}. Conducting similar analysis by hand would take days. To visualise the extent of work, consider manual implementation of a fault tree with around 3500 gates. Even if the analyst is prepared for such a challenge, the implemented model, and thus its outcome, will be prone to human error. In comparison, FMR is fast, accurate, consistent, and reliable.

\section{Discussion and related works}\label{discsec}

Reasoning about faults is not a new research topic. Diagnostics based on systematic inference was extensively studied in the 1980's. Some of the frequently cited articles include \cite{Ref_228,Ref_233,Ref_227}. Generally speaking, the studies were aimed at answering one question: given an observed deviation at the output of a system, how can we identify the (potentially) faulty components by reasoning based on the knowledge of system structure and/or system function? Logic circuits, in particular, would make an interesting application as they typically consist of complex yet well-defined, logical structures. Unlike inference-based diagnostics, FMR is primarily designed to target probable \emph{input faults}, rather than faulty system components. Input faults are external to the system and do not represent system failure scenarios.

FMR uses abstraction techniques, which is also a well-established area, particularly in formal methods \cite{Ref_235}. One may find similarities between the abstraction in FMR and that of Qualitative Reasoning (QR), where quantitative aspects of physical systems are replaced with qualitative representations \cite{Ref_238,Ref_232}. It should be noted however that QR is a method for expressing physical entities, and with an application in AI; whereas FMR is a technique for reasoning about failures, and (at least, currently) focused on conventional safety systems.

FMR is in some respects similar to FTA. Both methods look at the root causes that can result in a given top event. Parts of the computation techniques are similar between the two methods as well. However, FMR and FTA are different in some conceptual respects. FTA is a generic method that can be applied to any fault in any type of system, whereas FMR is specifically designed for analysing SIS programs. FTA computes a Boolean state of failure-success, but FMR computes multiple failure modes. The top event in FTA is a single event, but the program output in FMR can be an array of variables. The main question FMR tries to answer is that: given an abstracted state of output and given the function that produces it, what are the possible (abstracted) states of inputs to that function. This is obviously different to FTA in which we ``know'' the failure behaviour of a system and we build a model (fault tree) to summarise our understanding. FTA relies on the knowledge and skills of the analyst whereas FMR extracts information directly from the system. In a general term, FTA is a failure modeling method while FMR is a mode calculation method. 

FTA was first introduced in 1961 to study a missile launch control system. In almost six decades, many extensions and variations of the method have been introduced to solve other types of problems. Useful surveys are conducted on FTA and its extensions in recent years \cite{Ref_145,Ref_138}. Thanks to the growing capabilities of today’s technology, attention has shifted towards modularity and automatic synthesis of fault trees, which can greatly assist with solving complex problems at less effort. Various model-based dependability analysis methods have been developed, such as HiP-HOPS \cite{Ref_140,Ref_142}, AADL \cite{Ref_150} and AltaRica \cite{Ref_151}, which use FTA as their primary means and automate the synthesis process to some degrees. More recently, the concept of contract-based design has also been used for automatic generation of hierarchical fault trees from formal models \cite{Ref_220}.

The common concept in automatic hierarchical synthesis of fault trees is that if we have the typical definition of component fault trees, we can synthesise the system level fault tree by interconnecting the smaller fault trees of components. At a conceptual level, this idea is utilised by FMR too; however, the \textit{components} in FMR are the FBs, as opposed to the other methods that analyse physical systems. Also, while FMR uses the actual SIS program for its analysis, the other methods rely on separate models or specifications in order to generate fault trees. The actual running program in SIS is always the most accurate, detailed, and specific source of information on the behaviour of system, and having that FMR does not require any additional models.

Model checking has been used in SIS related applications too (see \cite{Ref_131,Ref_136} as examples). In model checking a formal specification of system (model) is checked against a formal specification of requirements. Such methods focus on verifying the program against the requirements, as opposed to FMR which aims to identify failure modes.

Satisfiability Modulo Theories (SMT) is about determining whether a first order formula is satisfiable with respect to some logical theory \cite{Ref_123,Ref_221}. SMT solvers are used in various applications in the field of computer and formal verification. With respect to FMR, SMT can potentially help with determining the SIS input values that can result in a given output value. While this makes a potential area for further research; our experiments so far indicate that any SMT analysis will require post-processing in order to transform the results into failure modes.

\section{Conclusion}\label{concsec}

In this paper we have shown how techniques from traditional formal methods can be brought to bear on a challenging problem in safety engineering: that of determining with precision how faults arising from incorrect sensor readings propagate through complex SIS programs.
Within  the safety engineering discipline, FMR is a novel way to analyse failure modes in Safety Instrumented Systems. Future work will include more complex constructs for function blocks, including looping, timing and probabilistic analysis. Moreover, we are working on implementing FMR for identify systematic failures in SIS programs, where the input to the program is correct but the output is faulty due to a pre-existing error in program.

\bibliographystyle{plain}
\bibliography{References}

\pagebreak
\appendix

\newcommand{\hjVar}[1] {#1} 
\newcommand{\hjFVar}[1] {\Hat{#1}} 
\newcommand{\hjFVal}[1]{\mathcal{#1}}
\newcommand{\hjValRl}[1]{
    \IfEqCase{#1}{
        {x}{a}
        {y}{a'}
        {#1}{#1}
    }
}
\newcommand{\hjValRp}[1]{
    \IfEqCase{#1}{
        {x}{m}
        {y}{m'}
        {#1}{\tilde{#1}}
    }
}

\newcommand{\TE}[1]{#1} 
\newcolumntype{M}[1]{>{\centering\arraybackslash}m{#1}}
\newcolumntype{N}{@{}m{0pt}@{}}

\section{Truth tables for individual FBs}

All possible failure scenario related to $Avg$ are summarised in Table \ref{STAvg}. In this Table, the 2\textsuperscript{nd} and 3\textsuperscript{rd} columns indicate the relationships between reported and intended values at inputs $\hjVar{x}_1$ and $\hjVar{x}_2$ and the 4\textsuperscript{th} column shows the type of fault at the output $\hjVar{y}$, caused by the inputs $\hjVar{x}_1$ and $\hjVar{x}_2$. A question mark indicates that the relation between $\hjValRp{y}$ and $\hjValRl{y}$ cannot be determined; i.e., all faults are possible. The last three columns in Table \ref{STAvg} translate the 2\textsuperscript{nd}, 3\textsuperscript{rd} and 4\textsuperscript{th} columns into failure modes, as used in the FMR modeling.

\begin{longtable}[!h]{|M{0.7cm}|M{2.4cm}|M{2.4cm}|M{2.4cm}|M{0.9cm}|M{0.9cm}|M{0.9cm}|}
\hline & & & & & & \\[-0.9em]
\multicolumn{1}{|c|}{\TE{no.}}&\multicolumn{1}{c|}{\TE{$(x_1 = m_1, a_1)$}}&\multicolumn{1}{c|}{\TE{$(x_2 = m_2, a_2)$}}&\multicolumn{1}{c|}{\TE{$(y = m', a')$}}&\multicolumn{1}{c|}{\TE{$\hjFVar{x}_1$}}&\multicolumn{1}{c|}{\TE{$\hjFVar{x}_2$}}&\multicolumn{1}{c|}{\TE{$\hjFVar{y}$}}\\ 
& & & & & & \\[-1em]
\specialrule{0.2em}{0.0em}{0.0em}  & & & & & & \\[-1em]
\TE{1} & \TE{$m_1<a_1$} & \TE{$m_2<a_2$} & \TE{$m'<a'$} & \TE{$\hjFVal{l}$} & \TE{$\hjFVal{l}$} & \TE{$\hjFVal{l}$} \\& & & & & & \\[-1em] \hline & & & & & & \\[-0.95em] 
\TE{2} & \TE{$m_1<a_1$} & \TE{$m_2=a_2$} & \TE{$m'<a'$} & \TE{$\hjFVal{l}$} & \TE{$\hjFVal{m}$} & \TE{$\hjFVal{l}$} \\& & & & & & \\[-1em] \hline & & & & & & \\[-0.95em] 
\TE{3} & \TE{$m_1<a_1$} & \TE{$m_2>a_2$} & \TE{$m'~?~a'$} & \TE{$\hjFVal{l}$} & \TE{$\hjFVal{h}$} & \TE{$\hjFVal{a}$} \\& & & & & & \\[-1em] \hline & & & & & & \\[-0.95em] 
\TE{4} & \TE{$m_1=a_1$} & \TE{$m_2<a_2$} & \TE{$m'<a'$} & \TE{$\hjFVal{m}$} & \TE{$\hjFVal{l}$} & \TE{$\hjFVal{l}$} \\& & & & & & \\[-1em] \hline & & & & & & \\[-0.95em] 
\TE{5} & \TE{$m_1=a_1$} & \TE{$m_2=a_2$} & \TE{$m'=a'$} & \TE{$\hjFVal{m}$} & \TE{$\hjFVal{m}$} & \TE{$\hjFVal{m}$} \\& & & & & & \\[-1em] \hline & & & & & & \\[-0.95em] 
\TE{6} & \TE{$m_1=a_1$} & \TE{$m_2>a_2$} & \TE{$m'>a'$} & \TE{$\hjFVal{m}$} & \TE{$\hjFVal{h}$} & \TE{$\hjFVal{h}$} \\& & & & & & \\[-1em] \hline & & & & & & \\[-0.95em] 
\TE{7} & \TE{$m_1>a_1$} & \TE{$m_2<a_2$} & \TE{$m'~?~a'$} & \TE{$\hjFVal{h}$} & \TE{$\hjFVal{l}$} & \TE{$\hjFVal{a}$} \\& & & & & & \\[-1em] \hline & & & & & & \\[-0.95em] 
\TE{8} & \TE{$m_1>a_1$} & \TE{$m_2=a_2$} & \TE{$m'>a'$} & \TE{$\hjFVal{h}$} & \TE{$\hjFVal{m}$} & \TE{$\hjFVal{h}$} \\& & & & & & \\[-1em] \hline & & & & & & \\[-0.95em] 
\TE{9} & \TE{$m_1>a_1$} & \TE{$m_2>a_2$} & \TE{$m'>a'$} & \TE{$\hjFVal{h}$} & \TE{$\hjFVal{h}$} & \TE{$\hjFVal{h}$} 
\\[0.15em]\hline 
\caption {Truth-table for $Avg$}\label{STAvg}
\end{longtable}

To use Table \ref{STAvg} for FB $Avg$ defined by Def. \ref{Avg}, recall that $\FVal{x}_1 \equiv \FVal{i}_1$, $\FVal{x}_2 \equiv \FVal{i}_2$ and $\FVal{y} \equiv \FVal{o}$. To prove Theorem \ref{CorrComp}, all we need to do is to group the combinations of $\FVal{x}_1$ and $\FVal{x}_2$ that correspond to $\mathcal{l}$, $\mathcal{h}$, and $\mathcal{m}$ in $\FVal{y}$ column. It is evident from Table \ref{STAvg} that rows 1-4 compose $\FVal{y} = \mathcal{l}$, rows 6-9 compose $\FVal{y} = \mathcal{h}$, and rows 3, 5 and 7 compose $\FVal{y} = \mathcal{m}$. Compare these combinations with the ones given in Def. \ref{Avg}.

Likewise, Table \ref{STGcom} can be used to prove Theorem \ref{CorrComp} for FB $Gcom_K$ as defined by Def. \ref{Gcom}. 

\begin{longtable}[!h]{|M{0.7cm}|M{2.4cm}|M{0.9cm}|M{0.9cm}|M{0.9cm}|M{0.9cm}|}
\hline & & & & & \\[-0.9em]
\multicolumn{1}{|c|}{\TE{no.}}&\multicolumn{1}{c|}{\TE{$(x= m, a)$}}&\multicolumn{1}{c|}{\TE{$m'$}}&\multicolumn{1}{c|}{\TE{$a'$}}&\multicolumn{1}{c|}{\TE{$\hjFVar{x}$}}&\multicolumn{1}{c|}{\TE{$\hjFVar{y}$}}\\ 
& & & & & \\[-1em]
\specialrule{0.2em}{0.0em}{0.0em}  & & & & & \\[-1em]
\TE{1} & \TE{$K < m < a$} & \TE{$\True$} & \TE{$\True$} & \TE{$\hjFVal{l}$} & \TE{$\hjFVal{m}$}\\& & & & & \\[-1em] \hline & & & & & \\[-0.95em] 
\TE{2} & \TE{$K < m = a$} & \TE{$\True$} & \TE{$\True$} & \TE{$\hjFVal{m}$} & \TE{$\hjFVal{m}$}\\& & & & & \\[-1em] \hline & & & & & \\[-0.95em] 
\TE{3} & \TE{$K < a < m$} & \TE{$\True$} & \TE{$\True$} & \TE{$\hjFVal{h}$} & \TE{$\hjFVal{m}$}\\& & & & & \\[-1em] \hline & & & & & \\[-0.95em] 
\TE{4} & \TE{$m \leq K < a$} & \TE{$\False$} & \TE{$\True$} & \TE{$\hjFVal{l}$} & \TE{$\hjFVal{f}$}\\& & & & & \\[-1em] \hline & & & & & \\[-0.95em] 
\TE{5} & \TE{$a \leq K < m$} & \TE{$\True$} & \TE{$\False$} & \TE{$\hjFVal{h}$} & \TE{$\hjFVal{t}$}\\& & & & & \\[-1em] \hline & & & & & \\[-0.95em] 
\TE{6} & \TE{$m < a \leq K$} & \TE{$\False$} & \TE{$\False$} & \TE{$\hjFVal{l}$} & \TE{$\hjFVal{m}$}\\& & & & & \\[-1em] \hline & & & & & \\[-0.95em] 
\TE{7} & \TE{$m = a \leq K$} & \TE{$\False$} & \TE{$\False$} & \TE{$\hjFVal{m}$} & \TE{$\hjFVal{m}$}\\& & & & & \\[-1em] \hline & & & & & \\[-0.95em] 
\TE{8} & \TE{$a < m \leq K$} & \TE{$\False$} & \TE{$\False$} & \TE{$\hjFVal{h}$} & \TE{$\hjFVal{m}$} 
\\[0.15em]\hline 
\caption {Truth-table for $Gcom_K$}\label{STGcom}
\end{longtable}

In the remaining part of this Appendix we propose FMBs for function blocks $Add$, $Sub$, $Abs$, $Lcom_K$, $Not$, $And$ and $Or$, and we present truth-tables that can be used to prove them.

\begin{definition}\label{Add}
Let $Add$ be the function defined by: $Add(x_1, x_2) \Defs x_1{+}x_2$. Its associated FMB, FAdd, is defined as follows:
\[
\begin{array}{ccc}
\{ \FVal{x}_1={\mathcal{h}}\lor \FVal{x}_2={h} \}&~~Add~~& 
\{\FVal{y}= \mathcal{h}\}\\
\{ \FVal{x}_1=\mathcal{l}\lor \FVal{x}_2=\mathcal{l} \}&~~Add~~& 
\{\FVal{y}= \mathcal{l}\}\\\\
\left\{\begin{array}{l}
 \FVal{x}_1=\mathcal{h} \land\FVal{x}_2=\mathcal{l} \lor\\
\FVal{x}_1=\mathcal{l} \land\FVal{x}_2=\mathcal{h} \lor \\
\FVal{x}_1=\mathcal{m} \land\FVal{x}_2=\mathcal{m} 
\end{array}\right\}
&~~ Add~~& 
\{\FVal{y}= \mathcal{m}\}
\end{array}
\]
\end{definition}

\begin{longtable}[!h]{|M{0.7cm}|M{2.4cm}|M{2.4cm}|M{2.4cm}|M{0.9cm}|M{0.9cm}|M{0.9cm}|}
\hline & & & & & & \\[-0.9em]
\multicolumn{1}{|c|}{\TE{no.}}&\multicolumn{1}{c|}{\TE{$(x_1 = m_1, a_1)$}}&\multicolumn{1}{c|}{\TE{$(x_2 = m_2, a_2)$}}&\multicolumn{1}{c|}{\TE{$(y = m', a')$}}&\multicolumn{1}{c|}{\TE{$\hjFVar{x}_1$}}&\multicolumn{1}{c|}{\TE{$\hjFVar{x}_2$}}&\multicolumn{1}{c|}{\TE{$\hjFVar{y}$}}\\ 
& & & & & & \\[-1em]
\specialrule{0.2em}{0.0em}{0.0em}  & & & & & & \\[-1em]
\TE{1} & \TE{$m_1<a_1$} & \TE{$m_2<a_2$} & \TE{$m'<a'$} & \TE{$\hjFVal{l}$} & \TE{$\hjFVal{l}$} & \TE{$\hjFVal{l}$} \\& & & & & & \\[-1em] \hline & & & & & & \\[-0.95em] 
\TE{2} & \TE{$m_1<a_1$} & \TE{$m_2=a_2$} & \TE{$m'<a'$} & \TE{$\hjFVal{l}$} & \TE{$\hjFVal{m}$} & \TE{$\hjFVal{l}$} \\& & & & & & \\[-1em] \hline & & & & & & \\[-0.95em] 
\TE{3} & \TE{$m_1<a_1$} & \TE{$m_2>a_2$} & \TE{$m'~?~a'$} & \TE{$\hjFVal{l}$} & \TE{$\hjFVal{h}$} & \TE{$\hjFVal{a}$} \\& & & & & & \\[-1em] \hline & & & & & & \\[-0.95em] 
\TE{4} & \TE{$m_1=a_1$} & \TE{$m_2<a_2$} & \TE{$m'<a'$} & \TE{$\hjFVal{m}$} & \TE{$\hjFVal{l}$} & \TE{$\hjFVal{l}$} \\& & & & & & \\[-1em] \hline & & & & & & \\[-0.95em] 
\TE{5} & \TE{$m_1=a_1$} & \TE{$m_2=a_2$} & \TE{$m'=a'$} & \TE{$\hjFVal{m}$} & \TE{$\hjFVal{m}$} & \TE{$\hjFVal{m}$} \\& & & & & & \\[-1em] \hline & & & & & & \\[-0.95em] 
\TE{6} & \TE{$m_1=a_1$} & \TE{$m_2>a_2$} & \TE{$m'>a'$} & \TE{$\hjFVal{m}$} & \TE{$\hjFVal{h}$} & \TE{$\hjFVal{h}$} \\& & & & & & \\[-1em] \hline & & & & & & \\[-0.95em] 
\TE{7} & \TE{$m_1>a_1$} & \TE{$m_2<a_2$} & \TE{$m'~?~a'$} & \TE{$\hjFVal{h}$} & \TE{$\hjFVal{l}$} & \TE{$\hjFVal{a}$} \\& & & & & & \\[-1em] \hline & & & & & & \\[-0.95em] 
\TE{8} & \TE{$m_1>a_1$} & \TE{$m_2=a_2$} & \TE{$m'>a'$} & \TE{$\hjFVal{h}$} & \TE{$\hjFVal{m}$} & \TE{$\hjFVal{h}$} \\& & & & & & \\[-1em] \hline & & & & & & \\[-0.95em] 
\TE{9} & \TE{$m_1>a_1$} & \TE{$m_2>a_2$} & \TE{$m'>a'$} & \TE{$\hjFVal{h}$} & \TE{$\hjFVal{h}$} & \TE{$\hjFVal{h}$} 
\\[0.15em]\hline 
\caption {Truth-table for $Add$}\label{STAdd}
\end{longtable}

\begin{definition}\label{Sub}
Let $Sub$ be the function defined by: $Sub(x_1, x_2) \Defs x_1{-}x_2$. Its associated FMB, FSub, is defined as follows:
\[
\begin{array}{ccc}
\{ \FVal{x}_1={\mathcal{h}}\lor \FVal{x}_2={l} \}&~~Sub~~& 
\{\FVal{y}= \mathcal{h}\}\\
\{ \FVal{x}_1=\mathcal{l}\lor \FVal{x}_2=\mathcal{h} \}&~~Sub~~& 
\{\FVal{y}= \mathcal{l}\}\\\\
\left\{\begin{array}{l}
 \FVal{x}_1=\mathcal{l} \land\FVal{x}_2=\mathcal{l} \lor\\
\FVal{x}_1=\mathcal{h} \land\FVal{x}_2=\mathcal{h} \lor \\
\FVal{x}_1=\mathcal{m} \land\FVal{x}_2=\mathcal{m} 
\end{array}\right\}
&~~ Sub~~& 
\{\FVal{y}= \mathcal{m}\}
\end{array}
\]
\end{definition}

\begin{longtable}[!h]{|M{0.7cm}|M{2.4cm}|M{2.4cm}|M{2.4cm}|M{0.9cm}|M{0.9cm}|M{0.9cm}|}
\hline & & & & & & \\[-0.9em]
\multicolumn{1}{|c|}{\TE{no.}}&\multicolumn{1}{c|}{\TE{$(x_1 = m_1, a_1)$}}&\multicolumn{1}{c|}{\TE{$(x_2 = m_2, a_2)$}}&\multicolumn{1}{c|}{\TE{$(y = m', a')$}}&\multicolumn{1}{c|}{\TE{$\hjFVar{x}_1$}}&\multicolumn{1}{c|}{\TE{$\hjFVar{x}_2$}}&\multicolumn{1}{c|}{\TE{$\hjFVar{y}$}}\\ 
& & & & & & \\[-1em]
\specialrule{0.2em}{0.0em}{0.0em}  & & & & & & \\[-1em]
\TE{1} & \TE{$m_1<a_1$} & \TE{$m_2<a_2$} & \TE{$m'~?~a'$} & \TE{$\hjFVal{l}$} & \TE{$\hjFVal{l}$} & \TE{$\hjFVal{a}$} \\& & & & & & \\[-1em] \hline & & & & & & \\[-0.95em] 
\TE{2} & \TE{$m_1<a_1$} & \TE{$m_2=a_2$} & \TE{$m'<a'$} & \TE{$\hjFVal{l}$} & \TE{$\hjFVal{m}$} & \TE{$\hjFVal{l}$} \\& & & & & & \\[-1em] \hline & & & & & & \\[-0.95em] 
\TE{3} & \TE{$m_1<a_1$} & \TE{$m_2>a_2$} & \TE{$m'<a'$} & \TE{$\hjFVal{l}$} & \TE{$\hjFVal{h}$} & \TE{$\hjFVal{l}$} \\& & & & & & \\[-1em] \hline & & & & & & \\[-0.95em] 
\TE{4} & \TE{$m_1=a_1$} & \TE{$m_2<a_2$} & \TE{$m'>a'$} & \TE{$\hjFVal{m}$} & \TE{$\hjFVal{l}$} & \TE{$\hjFVal{h}$} \\& & & & & & \\[-1em] \hline & & & & & & \\[-0.95em] 
\TE{5} & \TE{$m_1=a_1$} & \TE{$m_2=a_2$} & \TE{$m'=a'$} & \TE{$\hjFVal{m}$} & \TE{$\hjFVal{m}$} & \TE{$\hjFVal{m}$} \\& & & & & & \\[-1em] \hline & & & & & & \\[-0.95em] 
\TE{6} & \TE{$m_1=a_1$} & \TE{$m_2>a_2$} & \TE{$m'<a'$} & \TE{$\hjFVal{m}$} & \TE{$\hjFVal{h}$} & \TE{$\hjFVal{l}$} \\& & & & & & \\[-1em] \hline & & & & & & \\[-0.95em] 
\TE{7} & \TE{$m_1>a_1$} & \TE{$m_2<a_2$} & \TE{$m'>a'$} & \TE{$\hjFVal{h}$} & \TE{$\hjFVal{l}$} & \TE{$\hjFVal{h}$} \\& & & & & & \\[-1em] \hline & & & & & & \\[-0.95em] 
\TE{8} & \TE{$m_1>a_1$} & \TE{$m_2=a_2$} & \TE{$m'>a'$} & \TE{$\hjFVal{h}$} & \TE{$\hjFVal{m}$} & \TE{$\hjFVal{h}$} \\& & & & & & \\[-1em] \hline & & & & & & \\[-0.95em] 
\TE{9} & \TE{$m_1>a_1$} & \TE{$m_2>a_2$} & \TE{$m'~?~a'$} & \TE{$\hjFVal{h}$} & \TE{$\hjFVal{h}$} & \TE{$\hjFVal{a}$} 
\\[0.15em]\hline 
\caption {Truth-table for $Sub$}\label{STSub}
\end{longtable}

\begin{definition}\label{Abs}
Let $Abs$ be the function defined by: $Abs(x) \Defs |x|$. Its associated FMB, FAbs, is defined as follows:
\[
\begin{array}{ccc}
\{ \FVal{x}=\mathcal{l}\}&~~Abs~~& 
\{\FVal{y}= \mathcal{t}\}\\
\{ \FVal{x}=\mathcal{h} \}&~~Abs~~& 
\{\FVal{y}= \mathcal{f}\}\\\\
\{ \FVal{x}=\mathcal{l} \lor \FVal{x}=\mathcal{m} \lor \FVal{x}=\mathcal{h}\} 
&~~ Abs~~& 
\{\FVal{y}= \mathcal{m}\}
\end{array}
\]
\end{definition}

\begin{longtable}[!h]{|M{0.7cm}|M{2.4cm}|M{2.4cm}|M{0.9cm}|M{0.9cm}|}
\hline & & & & \\[-0.9em]
\multicolumn{1}{|c|}{\TE{no.}}&\multicolumn{1}{c|}{\TE{$(x = m, a)$}}&\multicolumn{1}{c|}{\TE{$(y = m', a')$}}&\multicolumn{1}{c|}{\TE{$\hjFVar{x}$}}&\multicolumn{1}{c|}{\TE{$\hjFVar{y}$}}\\ 
& & & & \\[-1em]
\specialrule{0.2em}{0.0em}{0.0em}  & & & & \\[-1em]
\TE{1} & \TE{$0 \leq m < a$} & \TE{$m'<a'$} & \TE{$\hjFVal{l}$} & \TE{$\hjFVal{l}$} \\& & & & \\[-1em] \hline & & & & \\[-0.95em] 
\TE{2} & \TE{$0 \leq m = a$} & \TE{$m'=a'$} & \TE{$\hjFVal{m}$} & \TE{$\hjFVal{m}$} \\& & & & \\[-1em] \hline & & & & \\[-0.95em] 
\TE{3} & \TE{$0 \leq a < m$} & \TE{$m'>a'$} & \TE{$\hjFVal{h}$} & \TE{$\hjFVal{h}$} \\& & & & \\[-1em] \hline & & & & \\[-0.95em] 
\TE{4} & \TE{$m < a < 0$} & \TE{$m'>a'$} & \TE{$\hjFVal{l}$} & \TE{$\hjFVal{h}$} \\& & & & \\[-1em] \hline & & & & \\[-0.95em] 
\TE{5} & \TE{$m = a < 0$} & \TE{$m'>a'$} & \TE{$\hjFVal{m}$} & \TE{$\hjFVal{h}$} \\& & & & \\[-1em] \hline & & & & \\[-0.95em] 
\TE{6} & \TE{$a < m < 0$} & \TE{$m'<a'$} & \TE{$\hjFVal{h}$} & \TE{$\hjFVal{l}$} \\& & & & \\[-1em] \hline & & & & \\[-0.95em] 
\TE{7} & \TE{$m < 0 \leq a$} & \TE{$m'~?~a'$} & \TE{$\hjFVal{l}$} & \TE{$\hjFVal{a}$} \\& & & & \\[-1em] \hline & & & & \\[-0.95em] 
\TE{8} & \TE{$a < 0 \leq m$} & \TE{$m'~?~a'$} & \TE{$\hjFVal{h}$} & \TE{$\hjFVal{a}$}  
\\[0.15em]\hline 
\caption {Truth-table for $Abs$}\label{STAbs}
\end{longtable}

\begin{definition}\label{Lcom}
Let $Lcom_K$ be the function defined by: $Lcom_K(x) \Defs (x < K)$. Its associated FMB, FLcom, is defined as follows:
\[
\begin{array}{ccc}
\{ \FVal{x}=\mathcal{l}\}&~~Lcom_K~~& 
\{\FVal{y}= \mathcal{t}\}\\
\{ \FVal{x}=\mathcal{h} \}&~~Lcom_K~~& 
\{\FVal{y}= \mathcal{f}\}\\\\
\{ \FVal{x}=\mathcal{l} \lor \FVal{x}=\mathcal{m} \lor \FVal{x}=\mathcal{h}\} 
&~~ Lcom_K~~& 
\{\FVal{y}= \mathcal{m}\}
\end{array}
\]
\end{definition}

\begin{longtable}[!h]{|M{0.7cm}|M{2.4cm}|M{0.9cm}|M{0.9cm}|M{0.9cm}|M{0.9cm}|}
\hline & & & & & \\[-0.9em]
\multicolumn{1}{|c|}{\TE{no.}}&\multicolumn{1}{c|}{\TE{$(x= m, a)$}}&\multicolumn{1}{c|}{\TE{$m'$}}&\multicolumn{1}{c|}{\TE{$a'$}}&\multicolumn{1}{c|}{\TE{$\hjFVar{x}$}}&\multicolumn{1}{c|}{\TE{$\hjFVar{y}$}}\\ 
& & & & & \\[-1em]
\specialrule{0.2em}{0.0em}{0.0em}  & & & & & \\[-1em]
\TE{1} & \TE{$K > m > a$} & \TE{$\True$} & \TE{$\True$} & \TE{$\hjFVal{h}$} & \TE{$\hjFVal{m}$}\\& & & & & \\[-1em] \hline & & & & & \\[-0.95em] 
\TE{2} & \TE{$K > m = a$} & \TE{$\True$} & \TE{$\True$} & \TE{$\hjFVal{m}$} & \TE{$\hjFVal{m}$}\\& & & & & \\[-1em] \hline & & & & & \\[-0.95em] 
\TE{3} & \TE{$K > a > m$} & \TE{$\True$} & \TE{$\True$} & \TE{$\hjFVal{l}$} & \TE{$\hjFVal{m}$}\\& & & & & \\[-1em] \hline & & & & & \\[-0.95em] 
\TE{4} & \TE{$m \geq K > a$} & \TE{$\False$} & \TE{$\True$} & \TE{$\hjFVal{h}$} & \TE{$\hjFVal{f}$}\\& & & & & \\[-1em] \hline & & & & & \\[-0.95em] 
\TE{5} & \TE{$a \geq K > m$} & \TE{$\True$} & \TE{$\False$} & \TE{$\hjFVal{l}$} & \TE{$\hjFVal{t}$}\\& & & & & \\[-1em] \hline & & & & & \\[-0.95em] 
\TE{6} & \TE{$m > a \geq K$} & \TE{$\False$} & \TE{$\False$} & \TE{$\hjFVal{h}$} & \TE{$\hjFVal{m}$}\\& & & & & \\[-1em] \hline & & & & & \\[-0.95em] 
\TE{7} & \TE{$m = a \geq K$} & \TE{$\False$} & \TE{$\False$} & \TE{$\hjFVal{m}$} & \TE{$\hjFVal{m}$}\\& & & & & \\[-1em] \hline & & & & & \\[-0.95em] 
\TE{8} & \TE{$a > m \geq K$} & \TE{$\False$} & \TE{$\False$} & \TE{$\hjFVal{l}$} & \TE{$\hjFVal{m}$} 
\\[0.15em]\hline 
\caption {Truth-table for $Lcom_K$}\label{STLcom}
\end{longtable}

\begin{definition}\label{Not}
Let $Not$ be the function defined by: $Not(x) \Defs \neg x$. Its associated FMB, FNot, is defined as follows:
\[
\begin{array}{ccc}
\{ \FVal{x}={\mathcal{t}}\}&~~Not~~& 
\{\FVal{y}= \mathcal{f}\}\\
\{ \FVal{x}={\mathcal{f}}\}&~~Not~~& 
\{\FVal{y}= \mathcal{t}\}\\\\
\left\{\begin{array}{l}
\FVal{x}=\mathcal{m} 
\end{array}\right\}
&~~ Not~~& 
\{\FVal{y}= \mathcal{m}\}
\end{array}
\]
\end{definition}

\begin{longtable}[!h]{|M{0.7cm}|M{0.9cm}|M{0.9cm}|M{0.9cm}|M{0.9cm}|M{0.9cm}|M{0.9cm}|}
\hline & & & & & & \\[-0.9em]
\multicolumn{1}{|c|}{\TE{no.}}&\multicolumn{1}{c|}{\TE{$m$}}&\multicolumn{1}{c|}{\TE{$a$}}&\multicolumn{1}{c|}{\TE{$m'$}}&\multicolumn{1}{c|}{\TE{$a'$}}&\multicolumn{1}{c|}{\TE{$\hjFVar{x}$}}&\multicolumn{1}{c|}{\TE{$\hjFVar{y}$}}\\ 
& & & & & & \\[-1em]
\specialrule{0.2em}{0.0em}{0.0em}  & & & & & & \\[-1em]
\TE{1} & \TE{\False} & \TE{\False} & \TE{\True} & \TE{\True} & \TE{$\hjFVal{m}$} & \TE{$\hjFVal{m}$} \\& & & & & & \\[-1em] \hline & & & & & & \\[-0.95em] 
\TE{2} & \TE{\False} & \TE{\True} & \TE{\True} & \TE{\False} & \TE{$\hjFVal{f}$} & \TE{$\hjFVal{t}$} \\& & & & & & \\[-1em] \hline & & & & & & \\[-0.95em] 
\TE{3} & \TE{\True} & \TE{\False} & \TE{\False} & \TE{\True} & \TE{$\hjFVal{t}$} & \TE{$\hjFVal{f}$} \\& & & & & & \\[-1em] \hline & & & & & & \\[-0.95em] 
\TE{4} & \TE{\True} & \TE{\True} & \TE{\False} & \TE{\False} & \TE{$\hjFVal{m}$} & \TE{$\hjFVal{m}$} 
\\[0.15em]\hline 
\caption {Truth-table for $Not$}\label{STNot}
\end{longtable}

\begin{definition}\label{And}
Let $And$ be the function defined by: $And(x_1, x_2) \Defs x_1 \wedge x_2$. Its associated FMB, FAnd, is defined as follows:
\[
\begin{array}{ccc}
\left\{\begin{array}{l} 
\FVal{x}_1={\mathcal{t}} \wedge \FVal{x}_2={t}\lor \\
\FVal{x}_1={\mathcal{t}} \wedge \FVal{x}_2={m} \lor \\
\FVal{x}_1={\mathcal{m}} \wedge \FVal{x}_2={t}
\end{array}\right\}
&~~And~~& 
\{\FVal{y}= \mathcal{t}\}\\\\
\left\{\begin{array}{l}
\FVal{x}_1=\mathcal{f}\lor \FVal{x}_2=\mathcal{f}
\end{array}\right\}
&~~And~~& 
\{\FVal{y}= \mathcal{f}\}\\\\
\left\{\begin{array}{l}
\FVal{x}_1=\mathcal{f} \land\FVal{x}_2=\mathcal{t} \lor\\
\FVal{x}_1=\mathcal{t} \land\FVal{x}_2=\mathcal{f} \lor\\
\FVal{x}_1=\mathcal{f} \land\FVal{x}_2=\mathcal{m} \lor\\
\FVal{x}_1=\mathcal{m} \land\FVal{x}_2=\mathcal{f} \lor\\
\FVal{x}_1=\mathcal{m} \land\FVal{x}_2=\mathcal{t} \lor\\
\FVal{x}_1=\mathcal{t} \land\FVal{x}_2=\mathcal{m} \lor\\
\FVal{x}_1=\mathcal{m} \land\FVal{x}_2=\mathcal{m} 
\end{array}\right\}
&~~ And~~& 
\{\FVal{y}= \mathcal{m}\}
\end{array}
\]
\end{definition}

\begin{longtable}[!h]{|M{0.7cm}|M{1.1cm}|M{1.1cm}|M{1.1cm}|M{1.1cm}|M{1.1cm}|M{1.1cm}|M{0.9cm}|M{0.9cm}|M{0.9cm}|}
\hline & & & & & & & & & \\[-0.9em]
\multicolumn{1}{|c|}{\TE{no.}}&\multicolumn{1}{c|}{\TE{$m_1$}}&\multicolumn{1}{c|}{\TE{$a_1$}}&\multicolumn{1}{c|}{\TE{$m_2$}}&\multicolumn{1}{c|}{\TE{$a_2$}}&\multicolumn{1}{c|}{\TE{$m'$}}&\multicolumn{1}{c|}{\TE{$a'$}}&\multicolumn{1}{c|}{\TE{$\hjFVar{x}_1$}}&\multicolumn{1}{c|}{\TE{$\hjFVar{x}_2$}}&\multicolumn{1}{c|}{\TE{$\hjFVar{y}$\footnote{Mode $\mathcal{u}$ in column $\hjFVar{y}$ is explained later.}}}\\ 
& & & & & & & & & \\[-1em]
\specialrule{0.2em}{0.0em}{0.0em}  & & & & & & &  & & \\[-1em]
\TE{1} & \TE{$\False$} & \TE{$\False$} & \TE{$\False$} & \TE{$\False$} & \TE{$\False$} & \TE{$\False$} & \TE{$\hjFVal{m}$} & \TE{$\hjFVal{m}$} & \TE{$\hjFVal{m}$} \\& & & & & & & & & \\[-1em] \hline & & & & & & & & & \\[-0.95em] 
\TE{2} & \TE{$\False$} & \TE{$\False$} & \TE{$\False$} & \TE{$\True$} & \TE{$\False$} & \TE{$\False$} & \TE{$\hjFVal{m}$} & \TE{$\hjFVal{f}$} & \TE{$\hjFVal{m}$} \\& & & & & & & & & \\[-1em] \hline & & & & & & & & & \\[-0.95em] 
\TE{3} & \TE{$\False$} & \TE{$\False$} & \TE{$\True$} & \TE{$\False$} & \TE{$\False$} & \TE{$\False$} & \TE{$\hjFVal{m}$} & \TE{$\hjFVal{t}$} & \TE{$\hjFVal{m}$} \\& & & & & & & & & \\[-1em] \hline & & & & & & & & & \\[-0.95em] 
\TE{4} & \TE{$\False$} & \TE{$\False$} & \TE{$\True$} & \TE{$\True$} & \TE{$\False$} & \TE{$\False$} & \TE{$\hjFVal{m}$} & \TE{$\hjFVal{m}$} & \TE{$\hjFVal{m}$} \\& & & & & & & & & \\[-1em] \hline & & & & & & & & & \\[-0.95em] 
\TE{5} & \TE{$\False$} & \TE{$\True$} & \TE{$\False$} & \TE{$\False$} & \TE{$\False$} & \TE{$\False$} & \TE{$\hjFVal{f}$} & \TE{$\hjFVal{m}$} & \TE{$\hjFVal{m}$} \\& & & & & & & & & \\[-1em] \hline & & & & & & & & & \\[-0.95em] 
\TE{6} & \TE{$\False$} & \TE{$\True$} & \TE{$\False$} & \TE{$\True$} & \TE{$\False$} & \TE{$\True$} & \TE{$\hjFVal{f}$} & \TE{$\hjFVal{f}$} & \TE{$\hjFVal{f}$} \\& & & & & & & & & \\[-1em] \hline & & & & & & & & & \\[-0.95em] 
\TE{7} & \TE{$\False$} & \TE{$\True$} & \TE{$\True$} & \TE{$\False$} & \TE{$\False$} & \TE{$\False$} & \TE{$\hjFVal{f}$} & \TE{$\hjFVal{t}$} & \TE{$\hjFVal{m}$} \\& & & & & & & & & \\[-1em] \hline & & & & & & & & & \\[-0.95em] 
\TE{8} & \TE{$\False$} & \TE{$\True$} & \TE{$\True$} & \TE{$\True$} & \TE{$\False$} & \TE{$\True$} & \TE{$\hjFVal{f}$} & \TE{$\hjFVal{m}$} & \TE{$\hjFVal{f}$} \\& & & & & & & & & \\[-1em] \hline & & & & & & & & & \\[-0.95em] 
\TE{9} & \TE{$\True$} & \TE{$\False$} & \TE{$\False$} & \TE{$\False$} & \TE{$\False$} & \TE{$\False$} & \TE{$\hjFVal{t}$} & \TE{$\hjFVal{m}$} & \TE{$\hjFVal{m}$} \\& & & & & & & & & \\[-1em] \hline & & & & & & & & & \\[-0.95em] 
\TE{10} & \TE{$\True$} & \TE{$\False$} & \TE{$\False$} & \TE{$\True$} & \TE{$\False$} & \TE{$\False$} & \TE{$\hjFVal{t}$} & \TE{$\hjFVal{f}$} & \TE{$\hjFVal{m}$} \\& & & & & & & & & \\[-1em] \hline & & & & & & & & & \\[-0.95em] 
\TE{11} & \TE{$\True$} & \TE{$\False$} & \TE{$\True$} & \TE{$\False$} & \TE{$\True$} & \TE{$\False$} & \TE{$\hjFVal{t}$} & \TE{$\hjFVal{t}$} & \TE{$\hjFVal{t}$} \\& & & & & & & & & \\[-1em] \hline & & & & & & & & & \\[-0.95em] 
\TE{12} & \TE{$\True$} & \TE{$\False$} & \TE{$\True$} & \TE{$\True$} & \TE{$\True$} & \TE{$\False$} & \TE{$\hjFVal{t}$} & \TE{$\hjFVal{m}$} & \TE{$\hjFVal{t}_u$} \\& & & & & & & & & \\[-1em] \hline & & & & & & & & & \\[-0.95em] 
\TE{13} & \TE{$\True$} & \TE{$\True$} & \TE{$\False$} & \TE{$\False$} & \TE{$\False$} & \TE{$\False$} & \TE{$\hjFVal{m}$} & \TE{$\hjFVal{m}$} & \TE{$\hjFVal{m}$} \\& & & & & & & & & \\[-1em] \hline & & & & & & & & & \\[-0.95em] 
\TE{14} & \TE{$\True$} & \TE{$\True$} & \TE{$\False$} & \TE{$\True$} & \TE{$\False$} & \TE{$\True$} & \TE{$\hjFVal{m}$} & \TE{$\hjFVal{f}$} & \TE{$\hjFVal{f}$} \\& & & & & & & & & \\[-1em] \hline & & & & & & & & & \\[-0.95em] 
\TE{15} & \TE{$\True$} & \TE{$\True$} & \TE{$\True$} & \TE{$\False$} & \TE{$\True$} & \TE{$\False$} & \TE{$\hjFVal{m}$} & \TE{$\hjFVal{t}$} & \TE{$\hjFVal{t}_u$} \\& & & & & & & & & \\[-1em] \hline & & & & & & & & & \\[-0.95em] 
\TE{16} & \TE{$\True$} & \TE{$\True$} & \TE{$\True$} & \TE{$\True$} & \TE{$\True$} & \TE{$\True$} & \TE{$\hjFVal{m}$} & \TE{$\hjFVal{m}$} & \TE{$\hjFVal{m}$} 
\\[0.15em]\hline 
\caption {Truth-table for $And$}\label{STAnd}
\end{longtable}

\begin{definition}\label{Or}
Let $Or$ be the function defined by: $Or(x_1, x_2) \Defs x_1 \lor x_2$. Its associated FMB, FOr, is defined as follows:
\[
\begin{array}{ccc}
\left\{\begin{array}{l}
\FVal{x}_1={\mathcal{t}} \lor \FVal{x}_2={t} 
\end{array}\right\}
&~~Or~~& 
\{\FVal{y}= \mathcal{t}\}\\\\
\left\{\begin{array}{l}
\FVal{x}_1=\mathcal{f}\wedge \FVal{x}_2=\mathcal{f} \lor \\
\FVal{x}_1={\mathcal{f}} \wedge \FVal{x}_2={m} \lor \\
\FVal{x}_1={\mathcal{m}} \wedge \FVal{x}_2={f}
\end{array}\right\}
&~~Or~~& 
\{\FVal{y}= \mathcal{f}\}\\\\
\left\{\begin{array}{l}
\FVal{x}_1=\mathcal{f} \land\FVal{x}_2=\mathcal{t} \lor\\
\FVal{x}_1=\mathcal{t} \land\FVal{x}_2=\mathcal{f} \lor\\
\FVal{x}_1=\mathcal{f} \land\FVal{x}_2=\mathcal{m} \lor\\
\FVal{x}_1=\mathcal{m} \land\FVal{x}_2=\mathcal{f} \lor\\
\FVal{x}_1=\mathcal{m} \land\FVal{x}_2=\mathcal{t} \lor\\
\FVal{x}_1=\mathcal{t} \land\FVal{x}_2=\mathcal{m} \lor\\
\FVal{x}_1=\mathcal{m} \land\FVal{x}_2=\mathcal{m} 
\end{array}\right\}
&~~ Or~~& 
\{\FVal{y}= \mathcal{m}\}
\end{array}
\]
\end{definition}

\begin{longtable}[!h]{|M{0.7cm}|M{1.1cm}|M{1.1cm}|M{1.1cm}|M{1.1cm}|M{1.1cm}|M{1.1cm}|M{0.9cm}|M{0.9cm}|M{0.9cm}|}
\hline & & & & & & & & & \\[-0.9em]
\multicolumn{1}{|c|}{\TE{no.}}&\multicolumn{1}{c|}{\TE{$m_1$}}&\multicolumn{1}{c|}{\TE{$a_1$}}&\multicolumn{1}{c|}{\TE{$m_2$}}&\multicolumn{1}{c|}{\TE{$a_2$}}&\multicolumn{1}{c|}{\TE{$m'$}}&\multicolumn{1}{c|}{\TE{$a'$}}&\multicolumn{1}{c|}{\TE{$\hjFVar{x}_1$}}&\multicolumn{1}{c|}{\TE{$\hjFVar{x}_2$}}&\multicolumn{1}{c|}{\TE{$\hjFVar{y}$}}\\ 
& & & & & & & & & \\[-1em]
\specialrule{0.2em}{0.0em}{0.0em}  & & & & & & &  & & \\[-1em]
\TE{1} & \TE{$\False$} & \TE{$\False$} & \TE{$\False$} & \TE{$\False$} & \TE{$\False$} & \TE{$\False$} & \TE{$\hjFVal{m}$} & \TE{$\hjFVal{m}$} & \TE{$\hjFVal{m}$} \\& & & & & & & & & \\[-1em] \hline & & & & & & & & & \\[-0.95em] 
\TE{2} & \TE{$\False$} & \TE{$\False$} & \TE{$\False$} & \TE{$\True$} & \TE{$\False$} & \TE{$\True$} & \TE{$\hjFVal{m}$} & \TE{$\hjFVal{f}$} & \TE{$\hjFVal{f}_u$} \\& & & & & & & & & \\[-1em] \hline & & & & & & & & & \\[-0.95em] 
\TE{3} & \TE{$\False$} & \TE{$\False$} & \TE{$\True$} & \TE{$\False$} & \TE{$\True$} & \TE{$\False$} & \TE{$\hjFVal{m}$} & \TE{$\hjFVal{t}$} & \TE{$\hjFVal{t}$} \\& & & & & & & & & \\[-1em] \hline & & & & & & & & & \\[-0.95em] 
\TE{4} & \TE{$\False$} & \TE{$\False$} & \TE{$\True$} & \TE{$\True$} & \TE{$\True$} & \TE{$\True$} & \TE{$\hjFVal{m}$} & \TE{$\hjFVal{m}$} & \TE{$\hjFVal{m}$} \\& & & & & & & & & \\[-1em] \hline & & & & & & & & & \\[-0.95em] 
\TE{5} & \TE{$\False$} & \TE{$\True$} & \TE{$\False$} & \TE{$\False$} & \TE{$\False$} & \TE{$\True$} & \TE{$\hjFVal{f}$} & \TE{$\hjFVal{m}$} & \TE{$\hjFVal{f}_u$} \\& & & & & & & & & \\[-1em] \hline & & & & & & & & & \\[-0.95em] 
\TE{6} & \TE{$\False$} & \TE{$\True$} & \TE{$\False$} & \TE{$\True$} & \TE{$\False$} & \TE{$\True$} & \TE{$\hjFVal{f}$} & \TE{$\hjFVal{f}$} & \TE{$\hjFVal{f}$} \\& & & & & & & & & \\[-1em] \hline & & & & & & & & & \\[-0.95em] 
\TE{7} & \TE{$\False$} & \TE{$\True$} & \TE{$\True$} & \TE{$\False$} & \TE{$\True$} & \TE{$\True$} & \TE{$\hjFVal{f}$} & \TE{$\hjFVal{t}$} & \TE{$\hjFVal{m}$} \\& & & & & & & & & \\[-1em] \hline & & & & & & & & & \\[-0.95em] 
\TE{8} & \TE{$\False$} & \TE{$\True$} & \TE{$\True$} & \TE{$\True$} & \TE{$\True$} & \TE{$\True$} & \TE{$\hjFVal{f}$} & \TE{$\hjFVal{m}$} & \TE{$\hjFVal{m}$} \\& & & & & & & & & \\[-1em] \hline & & & & & & & & & \\[-0.95em] 
\TE{9} & \TE{$\True$} & \TE{$\False$} & \TE{$\False$} & \TE{$\False$} & \TE{$\True$} & \TE{$\False$} & \TE{$\hjFVal{t}$} & \TE{$\hjFVal{m}$} & \TE{$\hjFVal{t}$} \\& & & & & & & & & \\[-1em] \hline & & & & & & & & & \\[-0.95em] 
\TE{10} & \TE{$\True$} & \TE{$\False$} & \TE{$\False$} & \TE{$\True$} & \TE{$\True$} & \TE{$\True$} & \TE{$\hjFVal{t}$} & \TE{$\hjFVal{f}$} & \TE{$\hjFVal{m}$} \\& & & & & & & & & \\[-1em] \hline & & & & & & & & & \\[-0.95em] 
\TE{11} & \TE{$\True$} & \TE{$\False$} & \TE{$\True$} & \TE{$\False$} & \TE{$\True$} & \TE{$\False$} & \TE{$\hjFVal{t}$} & \TE{$\hjFVal{t}$} & \TE{$\hjFVal{t}$} \\& & & & & & & & & \\[-1em] \hline & & & & & & & & & \\[-0.95em] 
\TE{12} & \TE{$\True$} & \TE{$\False$} & \TE{$\True$} & \TE{$\True$} & \TE{$\True$} & \TE{$\True$} & \TE{$\hjFVal{t}$} & \TE{$\hjFVal{m}$} & \TE{$\hjFVal{m}$} \\& & & & & & & & & \\[-1em] \hline & & & & & & & & & \\[-0.95em] 
\TE{13} & \TE{$\True$} & \TE{$\True$} & \TE{$\False$} & \TE{$\False$} & \TE{$\True$} & \TE{$\True$} & \TE{$\hjFVal{m}$} & \TE{$\hjFVal{m}$} & \TE{$\hjFVal{m}$} \\& & & & & & & & & \\[-1em] \hline & & & & & & & & & \\[-0.95em] 
\TE{14} & \TE{$\True$} & \TE{$\True$} & \TE{$\False$} & \TE{$\True$} & \TE{$\True$} & \TE{$\True$} & \TE{$\hjFVal{m}$} & \TE{$\hjFVal{f}$} & \TE{$\hjFVal{m}$} \\& & & & & & & & & \\[-1em] \hline & & & & & & & & & \\[-0.95em] 
\TE{15} & \TE{$\True$} & \TE{$\True$} & \TE{$\True$} & \TE{$\False$} & \TE{$\True$} & \TE{$\True$} & \TE{$\hjFVal{m}$} & \TE{$\hjFVal{t}$} & \TE{$\hjFVal{m}$} \\& & & & & & & & & \\[-1em] \hline & & & & & & & & & \\[-0.95em] 
\TE{16} & \TE{$\True$} & \TE{$\True$} & \TE{$\True$} & \TE{$\True$} & \TE{$\True$} & \TE{$\True$} & \TE{$\hjFVal{m}$} & \TE{$\hjFVal{m}$} & \TE{$\hjFVal{m}$} 
\\[0.15em]\hline 
\caption {Truth-table for $Or$}\label{STOr}
\end{longtable}

\paragraph{Practical implementation:}
For FMBs $FAnd$ and $FOr$ (Tables \ref{STAnd} and \ref{STOr}), we used subscript $\mathcal{u}$ in column $\hjFVar{y}$ to represent \emph{uncertain} faults \cite{Ref_200}. These are the scenarios for which a judgment on the propagation of fault cannot be made. Consider line 2 in Table $\ref{STOr}$ as an example. We know that the reported values at inputs $\Var{x}_1$ and  $\Var{x}_2$ and output $\Var{y}$ are all $\False$, and that matches up the function of $Or$. However, we cannot make a judgment as to which input dominates that state of the output; neither input $\False$ has priority over the other. This is different to the scenario in line 3, for instance, because the reported output is certainly dominated by the faulty input, which reports $\True$. For an FB output to be in a faulty state, not only the reported state of the output should be reachable by the given inputs, but its fault status should be a certain cause of a dominating input fault too. Otherwise, we cannot make a statement on whether and how the fault can propagate through the FB. 

In practice, uncertain failure scenarios such as rows 12 and 15 in Table \ref{STAnd} and rows 2 and 5 in Table \ref{STOr} will not be used for tracking failure modes in SIS programs. While this early filtering at the FB level may compromise the theoretical condition of completeness in some cases, it helps prevent producing combinations of non-faulty inputs as ``failure'' modes at the program level.  

Another filtering mechanism in FMR's backward tracking concerns the mode $\mathcal{m}$. During the tracking process, when we come across a variable with that particular mode, we do not continue tracking from that point any further. This is on the basis that, for whatever reasons, if a combination of input faults have resulted in no fault at the output, the combination is not important; i.e. the program output is immune against the failure combination in question.

In summary, the practical implementation of $FAnd$ and $FOr$ will include the following limited scenarios, which as can be seen are similar to the normal functions of AND and OR gates with two inputs:

$\\\begin{array}{ccc}
\left\{\begin{array}{l} 
\FVal{x}_1={\mathcal{t}} \wedge \FVal{x}_2={t}
\end{array}\right\}
&~~And~~& 
\{\FVal{y}= \mathcal{t}\}\\\\
\left\{\begin{array}{l}
\FVal{x}_1=\mathcal{f}\lor \FVal{x}_2=\mathcal{f}
\end{array}\right\}
&~~And~~& 
\{\FVal{y}= \mathcal{f}\}\\\\
\end{array}$

$\\\begin{array}{ccc}
\left\{\begin{array}{l}
\FVal{x}_1={\mathcal{t}} \lor \FVal{x}_2={t} 
\end{array}\right\}
&~~Or~~& 
\{\FVal{y}= \mathcal{t}\}\\\\
\left\{\begin{array}{l}
\FVal{x}_1=\mathcal{f}\wedge \FVal{x}_2=\mathcal{f}
\end{array}\right\}
&~~Or~~& 
\{\FVal{y}= \mathcal{f}\}\\\\
\end{array}$

\end{document}